\newif\ifdraft
\newcommand{\subst}[2]{\{^{#1}/_{#2}\}}
\newcommand{\semi}{\mathrel{,}}
\newcommand{\comma}{\mathrel{\mbox{\tt ,}}}
\newcommand{\fv}[1]{\mbox{{\em fv}}(#1)}
\renewcommand{\S}{\texttt{S}}
\newcommand{\bigfract}[2]{\frac{^{\textstyle #1}}{_{\textstyle #2}}}
\newcommand{\rulename}[1]{{\small {\sc[#1]}}}
\newcommand{\rulenamex}[1]{\mbox{\tiny [{\sc #1}]}}
\def \mathrule #1#2#3{	\begin{array}{l} 
                       	\rulenamex{#1}
                       	\\ 
                      	 \bigfract{#2}{#3}	
                       	\end{array}
					 }
\newcommand{\C}{\mbox{{\tt C}}}
\newcommand{\state}{\texttt{Q}}
\newcommand{\lolli}{\multimap}
\newcommand{\send}{\rightarrow}
\newcommand{\event}{\gg}
\newcommand{\tostate}{\Rightarrow}
\newcommand{\A}{\texttt{A}}
\newcommand{\State}{\Phi}
\newcommand{\SemEvent}{\Psi}
\newcommand{\Time}{\mathbb{t}}
\newcommand{\Events}[1]{\mathtt{Events}(#1)}
\newcommand{\Solidity}{\textsf{Solidity}}
\newcommand{\smallang}{\mbox{{\sf \emph{Stipula}}}}
\newcommand{\Stipula}{\mbox{{\sf \emph{Stipula}}}}
\newcommand{\program}{\mathcal{P}}
\newcommand{\contract}{\mathbb{C}}
\newcommand{\sol}{\mathbb{S}}
\newcommand{\zero}{\mbox{\raisebox{-.6ex}{{\tt -\!\!\!-}}}}
\newcommand{\sem}[1]{\llbracket#1\rrbracket}
\newcommand{\f}{\texttt{f}}
\newcommand{\g}{\texttt{g}}
\newcommand{\afield}{\texttt{h}}
\newcommand{\bfield}{\texttt{x}}
\newcommand{\lred}[1]{\mathrel{\stackrel{#1}{\longrightarrow}}}
\newcommand{\Lred}[1]{\mathrel{\stackrel{#1}{\Longrightarrow}}}
\newcommand{\nored}[2]{#1\comma #2 \mathrel{\nrightarrow}}
\newcommand{\xred}[1]
{ \setbox0=\hbox{$\, {}^{#1}\, $}
  \setbox1=\hbox{$\longrightarrow$}
  \loop\setbox1=\hbox{$-$\kern-0.3em\unhbox1}\ifdim\wd1<\wd0\repeat
  \hbox{$\ \mathop{\box1}\limits^{#1} \ $}
}
\newcommand{\wt}[1]{\vect{#1}}
\newcommand{\eqdef}{\stackrel{\textsf{\tiny def}}{=}}
\newcommand{\vect}[1]{\overline{#1}}
\newcommand{\marginnote}[2]{%
  {\makebox[0pt]{\color{magenta}$^\bigstar$}}%
  \marginpar{\parbox{2cm}{\flushright\tiny\sffamily\textbf{#1}: #2}}%
}
\newcommand{\Silvia}[1]{\marginnote{Silvia}{\color{red}#1}}
\newcommand{\Cosimo}[1]{\marginnote{Cosimo}{\color{blue}#1}}
\newcommand{\Giovanni}[1]{\marginnote{Giovanni}{\color{green}#1}}
\newcommand{\giovanni}[1]{GIOVANNI: \textbf{#1}}
\newcommand{\Reviewer}[2]{\marginnote{Revw #1}{\color{black}#2}}
\newcommand{\Silvia}[1]{}
\newcommand{\Cosimo}[1]{}
\newcommand{\Giovanni}[1]{}
\newcommand{\giovanni}[1]{}
\newcommand{\Reviewer}[2]{}
\begin{document}

\title{Pacta sunt servanda:  legal contracts in {\smallang}}
\author{Silvia Crafa\inst{1} \and Cosimo Laneve\inst{2} \and Giovanni Sartor\inst{3}}

\institute{University of Padova \and University of Bologna -- Inria FOCUS \and
University of Bologna -- European University Institute
}



\maketitle

\begin{abstract} 
There is a growing interest in running legal contracts on digital systems, 
at the same time, it is important to understand to what extent software contracts may capture legal content. 
%
We then undertake a foundational study of 
legal contracts and we distill four main features: agreement, permissions, violations 
and obligations. 
We therefore design {\Stipula}, a domain specific language that assists lawyers 
in programming legal contracts through specific patterns. The language is based 
on a small set of abstractions that correspond to common patterns in legal contracts, and that are amenable to be executed either on centralized or on distributed systems.
{\Stipula} comes with a formal semantics and an observational equivalence,
that provide for a clear account of the contracts' behaviour. 
%
%
The expressive power of the language is illustrated by a set of examples that correspond to template contracts that are often used in practice.
\end{abstract}




\section{Introduction}
\label{sec:introduction}

A legal contract is ``an agreement which is intended to give rise to a binding
legal relationship or to have some other legal effect'' \cite{CommonFrameReference2009Pr}. 
The parties are in principle free to determine 
the content of their contracts (\emph{party autonomy/freedom of contract}):  
the law recognizes  their intention to achieve the agreed outcomes and 
secures the enforcement of such outcomes (\emph{legally binding effect}). 
A contract produces the intended effects, declared by the parties, only if it is legally valid: 
the law may deny validity to certain clauses (\emph{e.g.}~excessive interests rate) 
and/or may establish additional effects that were not stated by the 
parties (\emph{e.g.} consumer's power to withdraw from an online sale, warranties, etc.). 
The intervention of the law is particularly significant when the contractor (usually the weaker party, such as the worker in an employment contract or the consumer in an online purchase) agrees  without having awareness of all clauses in the contract, nor having the ability to negotiate them, due to the existing unbalance of power. 

The assimilation of software contracts to legally binding contracts, or rather the double nature of digital contracts as computational mechanisms and as legal contracts, raises both legal and technological issues. 
Blockchain-based smart contracts have been advocated for digitally encoding legal contracts, so that the execution and enforcement of contractual conditions may occur automatically, without human intervention. However, 
we will mostly refer to \emph{software contracts} as ``digital legal contracts'', stressing 
the fact that most of the benefits of digitally encoding legal contracts come from 
the precise definition and automatic execution of a piece of programmable software, not necessarily operating over a blockchain.

In this paper, after discussing the main issues raised by the literature, we propose
a  technology that may contribute to addressing them.
The overall aim is to facilitate the transparency of  software contracts, as well 
as the mapping of  computational operations 
into legal-institutional  outcomes,  thus limiting or mitigating the 
problems concerning the implementation of software contracts and preventing
disputes between the parties. 
%
%

%
%
Specifically, we put forward {\smallang}, a new domain specific language for 
the creation of  software legal contracts.
{\smallang} is pivoted on a small number of abstractions that are useful to capture the 
distinctive elements of legal contracts, that is \emph{permissions}, \emph{prohibitions}, 
\emph{obligations}, fungible and non fungible \emph{assets} exchanges, 
risk (viz.~\emph{alea}),  \emph{escrows} and \emph{securities}.
All these normative elements are expressed by a strictly regimented behaviour in
legal contracts:
permissions and empowerments correspond to the possibility of performing an action
at a certain stage, 
prohibitions correspond to the interdiction of doing an action, while
obligations are recast into commitments that are checked at a specific time limit. 
Moreover, the set of normative elements changes over time according to the 
actions have been done (or not).
To model these changes, 
{\smallang} commits to a \emph{state-aware programming} style, inspired by the state machine 
pattern widely used in the programming language practice. 
This technique allows one to enforce the intended 
behaviour by prohibiting, for instance,  the invocation of a function before 
another specific function is called.

A second distinctive feature of {\smallang} is the \emph{event} primitive, 
a programming abstraction that is used to issue an obligation and schedule a future 
statement that automatically executes a corresponding penalty, 
if the obligation is not met. This allows one to implement legal obligations and 
commitments in terms of the future execution of a (state-aware) computation at 
a specific point in time. 

A third peculiarity of {\smallang} is the \emph{agreement operator}, which
marks that the contract's parties have reached a
consensus on the contractual arrangement they want to create.
In legal contracts, this phase corresponds to the 
subscription of the contract, where there are
parties that are going to set the 
contractual conditions and others that accept them.
%
Technically, the operation is
a multiparty synchronization, which may be implemented 
by ad-hoc protocols in  many kinds of distributed systems.
%
%
%
%
%
%

The fourth key feature regards \emph{assets}, which are first-class 
linear concepts in {\smallang}.
In particular, {\smallang} has an explicit, and thus conscious, management of linear resources,  
such as currency-like values and tokens.   
The transfer of such resources must preserve the total supply: the sender of the asset 
must always relinquish the control of the transferred asset. 
These assets 
are necessary in legal contracts: 
currency is required for payment but also for escrows, and tokens, both fungible 
and non-fungible, are useful to model securities and provide a digital 
handle on a physical good  (possibly equipped with and IoT mechanism). 
The design choice of explicitly marking asset movements with 
an ad hoc syntax in {\smallang}
promotes a safer, asset-aware, programming discipline that reduces the risk of the 
so-called double spending, the accidental loss or the locked-in assets. 
This is particularly useful for legal contracts running over a blockchain, where many kinds of assets and tokens are pervasive in the most successful blockchain applications and their induced economy.
Therefore,
{\smallang} uses a simple and powerful 
core of primitives that support the writing of digital legal contracts even by non 
ICT experts.

In Section~\ref{sec:legalcontracts} we give an interdisciplinary discussion about smart legal contracts, focusing on the interlace between the  digital and legal 
elements required by the digitalization of juridical acts.
The syntax of {\smallang} is formally defined in Section~\ref{sec:smartlegalcontracts}
where a simple example -- the Bike-Rental contract -- is used to describe 
the concepts. {\smallang} semantics is defined in Section~\ref{sec:semantics}, 
sticking to
an operational approach that specifies the runtime behaviour of legal contracts by means 
of transitions. 
In Section~\ref{sec:laws}, following a standard technique in concurrency theory~\cite{MilnerBook}, we develop an observational equivalence that provides for an equational theory of smart legal contracts. 
The equivalence is based on a notion of \emph{bisimulation} that
equates contracts
differing for hidden elements, such as names of states, and singles out conditions
for identifying contracts that send two assets in different order.
The study of the implementation of the distinctive elements of {\smallang}, namely agreements, assets and events, either on a blockchain system, using a {\Solidity}-like target language, or not, is 
undertaken in Section~\ref{sec:expressivepower}. The specification in {\smallang} of a set of archetypal acts, ranging from renting to (digital) licenses and to bets, is 
reported in Section~\ref{sec:examples}

We end our contribution by discussing the related work in Section~\ref{sec:relatedworks}
and delivering our final remarks in Section~\ref{sec:conclusions}. 

%
%
%
%
%
%



\section{Smart Legal contracts}
\label{sec:legalcontracts}

A substantial debate has taken place on whether the parties' decision to execute a 
smart contract having certain computational effects may count as legal 
contract establishing  corresponding legal effect, 
\emph{e.g.}~\cite{Finocchiaro2020,Sartor2018,Lemme2019,Radin2017Th}. 
A simple answer to this question comes from the
principle of ``\emph{freedom of form}'' in contracts, which is shared by modern legal 
systems: \emph{parties are free to express their agreement using the language and medium 
they prefer, including a programming language}. Therefore, by this principle, 
smart contracts may count as legal contracts. 

However, the problem is
whether smart contracts preserve the essential elements of legal ones. 
%
%
In this
respect, a legal contract is meant to bring about the \emph{institutional effects}
intended by the parties, that is establishing new  obligations,  rights, powers 
and liabilities between them or to transfer rights (such as rights to property) 
from one party to the other. These institutional effects are 
guaranteed by the  possibility of activating judicial enforcements. 
That is, each party may start a lawsuit if he believes that the other party has failed 
to comply with the contract. In this case, the judge will have to interpret the contract, 
ascertain the facts of the case, and determine whether there has indeed been 
a contractual violation. Accordingly, the defaulting party may be enjoined to comply 
or  pay damages.
%
%
%
%
Whether and to what extent we may consider that a smart contract produces judicially 
enforceable legal effects is more debatable, given that smart contract modify especially in absence of international 
technical standards and transnational legal frameworks.

We recognise that no easy and comprehensive solution is yet at hand for the 
issue we have just mentioned. However, we believe that it can be  at least mitigated 
if a strict, and understandable mapping is established between executable instructions and 
institutional-normative effects.  
To this aim, we observe that 
the lifecycle of a legal contract goes through a number of phases: 
(\emph{a}) formation and negotiation, (\emph{b}) contract storage/notarizing,
(\emph{c}) performance, enforcement and monitoring, (\emph{d}) possible modification
(\emph{e}) dispute resolution, and (\emph{f}) termination.
Software-based solutions can be valuable in all these phases, additionally, the specific features of blockchain-based implementations make them convenient in some of them. 
For example, the negotiation of contractual conditions could benefit from software-mediated interactions, but might require a degree of privacy that conflicts with that of a public blockchain, 
which naturally runs on transnational 
infrastructures, thus crossing several, possibly different, legal systems and jurisdictions (\emph{c.f.}~General Data Protection Regulation is valid only in Europe).
Similarly the dispute resolution can take advantage of online services with flexible interfaces, since it can hardly be fully programmed nor ported on-chain.
Exceptional behaviors, such as mutual or unilateral dissent, contract termination or contract modifications matching a change in the parties'~will, are also very problematic and require a flexible programming style that admits suitable patterns to amend the behaviour of a 
running software.

On the other hand, software-based solutions, possibly based on the blockchain, are perfectly suited to phases (\emph{b}) and (\emph{c}). Both the content of the legal contract and the expression of agreement of the parties can be digitally notarized, and code can be used to express the contractual clauses and automatically enforce 
them through the runtime execution. Additionally, encoding legal contracts as software has the advantage of enabling the usage of verification (formal) methods to ensure the correctness of the software execution.

%
Nevertheless, if smart contracts are legally binding, then it is necessary to ensure 
that the parties are fully aware of the computational effects of their code. 
Only in this case, there may be a genuine  agreement over the content of the contract. 
Thus transparency and some degree of readability by contractors that have no or little 
computer expertise becomes a key requirement.
Relating to this point, 
we acknowledge that similar problems  also exist in natural language contracts,
which are usually signed without the parties being  aware of all of their  clauses
and judicial enforcement being too costly or complex to be practicable. 
Usually, when concluding online purchases of  goods or services,  most consumers  
just click the ``accept'' button, without even trying to read the clauses (which are often lengthy and full of legal jargon). 


We believe that {\smallang} provides some important advantages as a language 
for specifying smart legal contracts. Its primitives are concise and abstract enough to be
easily accessible to lawyers. Its formal operational semantics 
promises that the 
execution of contracts does not lead to unexpected behaviours and is amenable to
automatic verification.
%
%
%
To mention a distinctive feature of {\smallang}, the  
{\tt agreement} primitive helps to deal with some legal issues, since it 
clearly identifies the moment when (some) legal effects are triggered and the 
parties who are involved. 
For instance, the set of parties involved in the agreement might include an 
Authority that is charged to monitor contextual 
constraints, such as obligations 
of diligent storage and care, or the obligations of using goods only as intended, 
taking care of litigations and dispute resolution. Moreover, untrusted players 
involved in a bet contract can rely on the agreement to explicitly define  
the data source providing the outcome of the aleatory value associated to the bet.

\section{The \smallang\, language}
\label{sec:smartlegalcontracts}

\begin{table}[t]
{\small
\begin{tabular}{cc}
\\
\hspace{1cm} &
\begin{lstlisting}[mathescape,basicstyle=\ttfamily]
F $::=$  _ $|$ @$\state$ $\A :$ f($\vect {\tt z}$)[$\vect {\tt y}$] (B){$\,$S W$\,$}$\,\tostate\,$@$\state'\; $ F
S $::=$  _ $|$ E $\send$ x S $|$  E $\lolli$ h,h$'$ S  $|$ $\texttt{E}$ $\send$ $\A$ S $|$ E $\lolli$ h,$\A$ S  $|$ (B){$\,$S$\,$} S
W $::=$  _  $|$ E$\,\event\,$@$\state\,${$\,$S$\,$}$\,\tostate\,$@$\state'\; \;$  W 
E $::=$  now  $|$  Real  $|$  String  $|$  B  $|$  X  $|$  E op E
B $::=$  true  $|$  false  $|$  E rop E  $|$  B $\&\&$ B    $|$  B || B  $|$  ~B
\end{lstlisting}
\\
\\
\end{tabular}
}
\caption{\label{tab:syntax} Syntax of {\smallang}.}
\end{table}

{\smallang} features a minimal set of primitives that are primary in legal 
contracts, such as agreements, 
field updates, conditional behaviour, timed events, value and asset transfer, 
functions and states like Finite State Machines (FSMs).

We use countable sets of: \emph{contract names}, ranged over by $\C$, $\C'$, $\cdots$; names of externally owned accounts, called~\emph{parties},
ranged over by $A$, $A'$, $\cdots$;  \emph{function names}  ranged over $\f$, $\g$, $\cdots$.
Parties represent the users involved in the 
contract, \emph{i.e.}~authenticated users in centralized systems (or 
addresses in blockchain systems).
\emph{Assets} and generic 
contract's \emph{fields} are syntactically set apart since they have different semantics. 
Then we assume a countable set of \emph{asset names}, ranged over by $\afield$, $\afield'$, 
$\cdots$, and a set of \emph{field names}, ranged over $\bfield$, $\bfield'$, 
$\cdots$. We reserve \texttt{z}, \texttt{z}$'$, \texttt{y}, \texttt{y}$'$, for  function parameters and $\A, \A', \cdots$ for parameters that are parties. 
Finally, we will use $\state$, $\state'$, $\cdots$, to range over contract states.
%
To simplify the syntax, we often use the vector notation ${\vect x}$ to denote possibly empty sequences of elements.
A smart legal contract in {\smallang} is written

{\small
\begin{lstlisting}[mathescape,basicstyle=\ttfamily]
    stipula C {
        assets ${\vect {\tt h}}$  
        fields $\vect{\tt x}$ 
 
        agreement($\vect{\tt A}$)(${\vect{\tt x}'}$){   $\quad\quad$// $\vect{\tt x}'\subseteq {\vect{\tt x}}$
             $\vect{{\tt A}_1}$ : $\vect{\tt x}_1$
             $\cdots$
             $\vect{{\tt A}_n}$ : $\vect{\tt x}_n$   $\quad\quad$ // $\bigcup_{i \in 1..n} \vect{{\tt A}_i} \subseteq \vect{{\tt A}}      \qquad       
                       \bigcup_{i \in 1..n} \vect{{\tt x}_i} \subseteq \vect{{\tt x}'}
	                   \qquad	\bigcap_{i \in 1..n} \vect{{\tt x}_i} = \varnothing$
        } $\tostate$ @$\state$  

        F
    }
\end{lstlisting}
}

\noindent
where \texttt{C} identifies the contract; its body contains assets and fields (without any
typed information: {\smallang} is type-free), the \emph{agreement code},
where $\bigcup_{i \in 1..n} \vect{{\tt x}_i}$ is a partition of $\vect{{\tt x}'}$ while the sequences of parties $\vect{{\tt A}_i}$ are subsets of the full list $\vect{\tt A}$ of parties involved in the contract \texttt{C}. Finally, {\tt F} is a sequence 
of functions, written according to the syntax given in Table~\ref{tab:syntax}.

The definition of \emph{who} is going to participate to the contract and
\emph{what} are the terms of the contract in an explicit abstraction 
-- the \emph{agreement} -- is another distinctive feature of {\smallang}. 
Technically, the agreement is the 
\emph{constructor} of the contract, that specifies the parameters $\vect {\tt x'}$, i.e., the terms of the contract, and the set $\vect {\tt A}$ of involved parties. Moreover, the code specifies who must agree on the initial value of contract's fields and
 the initial state of the contract.
Observe that no asset can be set during the agreement. 
It is assumed that fields, assets and parties' names do not contain duplicates. 

The dichotomy between assets and fields is a key design choice of {\smallang}.
Indeed, the relevance of first-class resources, \emph{i.e.}~linear values that 
cannot be copied nor dismissed, is widely acknowledged in several programming languages
(such as {\tt Rust} or the smart contract languages~{\sf Move}~\cite{Move} 
and {\sf Nomos}~\cite{Nomos})
to support a safer asset-aware contract programming.
Legal contracts also manage assets, such as money and tokens granting a digital 
access to (possibly physical) goods or 
services. Henceforth the decision to syntactically highlight the differences 
between operations on values and on assets. 

Functions \texttt{F} and their bodies are written according to the syntax in Table \ref{tab:syntax}.
The function's syntax highlights the constraint that only the party {\A} can invoke the function \texttt{f}, and only when the contract is in state \texttt{\state}.
%
Function's parameters are split in two lists: the \emph{formal parameters} 
$\vect {\tt z}$ in brackets and the \emph{asset parameters}
$\vect {\tt y}$ in square brackets. 
The \emph{precondition} \texttt{(B)} is a predicate on parameters of the functions and 
fields and assets of the 
contract that constrains the execution of the body of \texttt{f}. 
Finally the \emph{body} \texttt{\{S W\}\,$\tostate$\,@\state}$'$ specifies the 
\emph{statement part} $\texttt{S}$, the \emph{event part} \texttt{W}, and the state 
$\state'$ of the contract when the function execution terminates.
Function's parameters are assumed without duplicates, and 
empty lists of (asset) parameters are shortened by omitting empty parenthesis.
Additionally we assume that function parameters $\vect{\tt z}$ and
$\vect{\tt y}$ do not occur in \texttt{W} to enforce that it is correctly executed outside
the scope of the function. 

\emph{Statements} \texttt{S} include the empty statement $\zero$ and different types of 
assignments, followed by a continuation. Assignments use the two symbols $\lolli$ and 
$\send$ to differentiate updates of assets and of fields, respectively. 
The syntax of the two operators is taken from~\cite{Nomos}.
%
Assignments can be either \emph{local}, that is referring to local fields or local assets, denoted by 
\texttt{E}~$\send$~\texttt{x} and \texttt{E}~$\lolli$~\texttt{h,h$'$}, respectively, 
or they can be \emph{remote}, denoted by $\texttt{E}$~$\send$~\texttt{\A} and 
\texttt{E}~$\lolli$~\texttt{h,\A}, defining the sending of a value and an asset, 
respectively, to the address \A. 
Asset assignments are ternary operations: the meaning of 
\texttt{E}~$\lolli$~\texttt{h,h$'$} is that the value of \texttt{E} is subtracted to the asset
\texttt{h} and added to the asset \texttt{h$'$} -- \emph{resources stored in assets can be 
moved but cannot be destroyed}. 
The operational semantics will ensure that asset assignments can at most drain an asset, preventing assets with negative values.
In the rest of the paper we will always abbreviate assignments such as 
\texttt{h}~$\lolli$~\texttt{h,h$'$} and \texttt{h}~$\lolli$~\texttt{h,\A$'$} (which are 
very usual, indeed) into \texttt{h}~$\lolli$~\texttt{h$'$} and \texttt{h}~$\lolli$~\texttt{\A$'$},
respectively. 
%
Statements also include 
\emph{conditionals} \texttt{(B)\{\,S\,\} S$'$} that executes \texttt{S} if the 
predicate \texttt{B} is true and continues as \texttt{S$'$}.

\emph{Events} \texttt{W} are sequences of \emph{timed continuations} that schedule some code for future execution. More precisely, the term 
\texttt{E} $\event$~\texttt{@\state\,\{S\}}\,$\tostate$\,\texttt{@\state$'$} schedules an execution that is triggered  at a time $\Time$ that is the value of \texttt{E}.
When triggered, the continuation 
\texttt{S} will be executed only if the contract's state is {\state}. At the end of the execution of \texttt{S}, the contract transits to {\state$'$}. 
That is, in {\smallang}, the programming abstractions of FSMs
are used to schedule the future execution of a (state-aware) computation at a specific 
point in time. We will show that this notion of events is pivotal in the encoding of legal obligations and commitments.

\emph{Expressions} \texttt{E} can be {\tt Real} (written with the fixed-point notation),
{\tt String} (written between ``{\tt "}''), names of assets, fields and parameters, generically ranged over by \texttt{X}. Expressions also include boolean values and boolean expressions
{\tt B}. The keyword {\tt now}, expressed as a real, 
stores the present time (when the code is executed).
We will range over constant \emph{values} with 
the names $u, v, \cdots$; they also include asset constants like tokens.
%
We will be a little liberal with values and operations, 
generically denoted by \texttt{E op E}: they include standard arithmetic operations 
and operations on tokens, \emph{e.g.} \texttt{use\_once(token)} generates a 
usage-code providing a single access to the service or the good associated 
to the \texttt{token} asset. The operation $\Time$\texttt{ + n} sums \texttt{n} seconds 
to the time 
value $\Time$. 
\emph{Boolean expressions} \texttt{B} are the standard ones, where the operations 
\texttt{rop} are the relational operations (\texttt{==}, \texttt{>=}, etc.) 
The set of names occurring in \texttt{E} will be noted by $\fv{\mbox{\tt E}}$.

A {\smallang} program $\program$ is a sequence of smart legal contracts definitions. 
The contracts are inactive as long as no group of addresses has interest to run them, 
by invoking the agreement code. 
We remark that in {\smallang} the code of a contract cannot invoke another contract:
we postpone to future work the study of language extensions allowing cross references 
between legal contracts using inheritance and composition. 
Therefore, at the moment, since legal contracts are independent, there is no loss 
of generality in considering a program to be composed by a single smart legal 
contract definition.

\paragraph{Example}
\label{ex.BikeRental}
Consider the following simple contract for renting bikes:

{\footnotesize
\begin{lstlisting}[numbers=left,numberstyle=\tiny,mathescape,basicstyle=\ttfamily,caption={The rent for free contract},captionpos=b] 
stipula Bike_Rental {  
  assets wallet
  fields cost, rent_time, use_code

  agreement (Lender, Borrower)(cost, rent_time) {
     Lender, Borrower : cost, rent_time
  } $\tostate$ @Inactive

  @Inactive Lender : offer (z) {
     z $\send$ use_code  
  } $\tostate$ @Proposal

  @Proposal Borrower : accept [y] 
    (y == cost) {
      y $\lolli$ wallet
      use_code $\send$ Borrower
      now + rent_time  $\event$ 
          @Using {              //end-of-time usage
              "End_Reached" $\send$ Borrower 
              wallet $\lolli$  Lender 
          } $\tostate$ @End
  } $\tostate$ @Using

  @Using Borrower : end {
    wallet $\lolli$ Lender 
  } $\tostate$ @End
}
\end{lstlisting}
}

The agreement code specifies that there are two parties -- the \texttt{Lender} and \texttt{Borrower} -- and that the \texttt{Lender} sets the values for the time of usage (\texttt{rent\_time})
and the cost, while the \texttt{Borrower} has to agree on such values. Then \texttt{Lender}
sends a use-code that is stored in the contract (in the \texttt{use\_code} field) 
and is not accessible to \texttt{Borrower} till he pays for the usage. The transition from \texttt{Inactive} to \texttt{Proposal} at the end of the \texttt{offer} function enables
the \texttt{Borrower} to pay for the usage -- function \texttt{accept}  -- 
that 
takes in input an asset {\tt y} and moves it into {\tt wallet} with 
{\tt y $\lolli$ wallet} at line 15 
(this is a shortening for
{\tt wallet $\lolli$ wallet, Borrower}). Then the contract sends the use-code
to the borrower  (line 16), which now can use the bike till the time limit. 
This constraint is expressed by the 
event in lines 17-21. We have two remarks: first, the payment is not made to \texttt{Lender}
but the asset is stored in the contract (in \texttt{wallet}); second, the event will be
triggered when the time expires. In this case a message to \texttt{Borrower} is sent,
the payment is transferred to \texttt{Lender}, which will change bike's use code so that the bike will be locked at the next \texttt{Borrower}'s stop.
The function \texttt{end} can be invoked by \texttt{Borrower} to terminate the renting
before time expires. The legal issues involved in a rent contract will be discussed in Section~\ref{sec:examples}.

\section{Semantics}
\label{sec:semantics}

The meaning of {\smallang} primitives is defined in an operational way 
by means of a transition 
relation.
Let $\C(\State\semi\ell\semi \Sigma \semi\SemEvent)$ be a \emph{runtime contract} where 
\begin{itemize}
\item
$\C$ is the contract name;
\item 
$\State$ is the current state of the contract: it is either $\zero$ (for no state) or a contract state \state;
\item
$\ell$ is a mapping from fields and assets to values;
\item
$\Sigma$ is  a possible residual of a function body or of an event handler, \emph{i.e.}~$\Sigma$ is either 
$\zero$ or a term  $\texttt{S}\ \texttt{W}\tostate \texttt{@}\state$;
\item
$\SemEvent$ is a (possibly empty) multiset of \emph{pending events} that have been already scheduled 
for future execution but not yet triggered. We let $\SemEvent$ be $\zero$ when there are no pending events, otherwise $\SemEvent=\texttt{W}_1 ~|~ \ldots ~|~ \texttt{W}_n$ 
such that each \texttt{W}$_i$ is a single event expression (not a sequence), and its \emph{time guard} is an expression that has already been evaluated into a time value $\Time_i$. 
\end{itemize}
Runtime contracts are ranged over by $\contract$, $\contract'$, $\cdots$.
A \emph{configuration}, ranged over by $\sol$, $\sol'$, $\cdots$, is a pair 
$\contract {\comma} \Time$, where $\Time$ is a global clock. 
As anticipated, there is no loss of generality in considering programs made of 
a single running contract; a remark about the extension to configurations 
containing several contracts is reported at the end of the section.

The transition relation of a {\smallang} program $\program$ 
is $\sol \lred{\mu}_{\program} \sol'$, where 
$$ \mu \quad ::= \quad \_  \quad |\quad  (\vect A,\, {\vect A_1} :  \vect v_1 , \cdots, 
 {\vect A_n} : \vect v_n) \quad |\quad  
                 A : \texttt{f}(\wt{u})[\wt{v}] \quad |\quad
                 v \send A \quad |\quad  v \lolli A 
$$
that is the label $\mu$ is either empty, or denotes an initial agreement, or a function call, or a value send, or an asset transfer.
In the following we will always omit the 
index $\program$ because it is considered implicit.
The formal definition of $\sol \lred{\mu} \sol'$ is given in Table~\ref{tab:sem1}, 
using the following auxiliary predicates and
functions:
\begin{itemize}
\item
$\sem{\texttt{E}}_{\ell}$ is a function that returns the value of \texttt{E}
in the memory $\ell$.
We omit the definition. 
\item
$\sem{\texttt{W}}_\ell$ is the multiset of scheduled events obtained from the 
sequence \texttt{W} by replacing every time expression by its value in the time guards. 
That is $\sem{\zero}_\ell = \zero$ \ and 
$\sem{ \texttt{E}\event\texttt{@}\state\{\texttt{S}\}\tostate\texttt{@}\state' 
\ \ \texttt{W}'}_\ell = \sem{ \texttt{E}}_\ell 
\event \texttt{@}\state\{\texttt{S}\}\tostate\texttt{@}\state' ~|~ \sem{\texttt{W}'}_\ell$.
\item 
Let $\SemEvent$ be a multiset of pending events, and $\Time$ a time value, then 
the predicate $\nored{\SemEvent}{\Time}$ is \emph{true} whenever 
$\SemEvent = \Time_1 \event \texttt{@}\state_1 \{ \texttt{S}_1 \} \tostate
\texttt{@}\state_1' ~|~ \cdots ~|~  \Time_n \event \texttt{@}\state_n \{ 
\texttt{S}_1 \} \tostate
\texttt{@}\state_n'$ and, for every $1 \leq i \leq n$, $\Time_i \neq \Time$, \emph{false}
otherwise.
\end{itemize}

%
\begin{table*}[t]
{\small
\[
\begin{array}{c}

%

\mathrule{Agree}{
	\begin{array}{c}
	\texttt{agreement(} \vect{\tt A} \texttt{)(}\vect{\tt x}\texttt{)\{ } 
	\vect{\tt A}_1 : \vect{\tt x}_1  \ \cdots\   
	\vect{\tt A}_n : \vect{\tt x}_n \ \} 
	\tostate \texttt{@}\state \ \in \; \C
	\end{array}
	}{
	\C(\zero \semi \emptyset \semi \zero\semi \zero)\comma\Time 
	\lred{(\vect A,\, {\vect A_i}{:} \vect v_i \ ^{i=1...n})} 
	\C(\state \semi 
	[{\vect{\tt A}} \mapsto {\vect A}, \vect{\tt x}_i~\mapsto~ \vect v_i \ ^{i=1...n}] 
	\semi \zero \semi \zero)\comma \Time 
	}
\\[.3cm]
%

\mathrule{Function}{
	\begin{array}{c}
	\texttt{@}\state~\A~\texttt{:~f(} \wt{\mbox{{\tt z}}} \texttt{)[} \wt{\mbox{{\tt y}}} \texttt{] (B) \{ S~W \}} \tostate \texttt{@}\state' \in \C
	\\
	\nored{\SemEvent}{\Time}
	\\
	\ell(\A) = A
	\quad
	\ell' = \ell[ \wt{\mbox{{\tt z}}} \mapsto \wt{u}, \wt{\mbox{{\tt y}}} \mapsto \wt{v} ]
	\quad
	\sem{\texttt{B}}_{\ell'} = {\it true}
	\end{array}
	}{
	\C(\state \semi \ell \semi \zero \semi \SemEvent) \comma \Time 
	    \lred{A : \texttt{f}(\wt{u})[\wt{v}]}
	\C(\state ; \ell' \semi \texttt{S}\,\texttt{W} \tostate \texttt{@}\state' 
	\semi \SemEvent) \comma \Time
	}
	\quad
\mathrule{State Change}
    {\sem{\texttt{W}\subst{\Time}{\texttt{now}}}_{\ell} = \SemEvent'}
    {
	\C(\state \semi \ell \semi \zero \,\texttt{W}\tostate \texttt{@}\state' \semi \SemEvent) 
	\comma \Time 
	\lred{}
	\C(\state' \semi \ell \semi \zero \semi \SemEvent' ~|~\SemEvent) 
	\comma \Time
	} 

\\[1cm]
\mathrule{Event Match}{
	\begin{array}{c}
	\SemEvent = \Time \event \texttt{@}\state~\{~\texttt{S}~\} \tostate \texttt{@}\state' ~|~ \SemEvent'
	\end{array}
	}{
	\C(\state \semi \ell \semi \zero \semi \SemEvent) \comma \Time \lred{}
	\C(\state \semi \ell \semi \texttt{S}\tostate \texttt{@}\state' \semi \SemEvent') 
	\comma \Time  } 
\qquad
\mathrule{Tick}{
	\nored{\SemEvent}{\Time}
	}{
	\C(\state \semi \ell \semi \zero \semi \SemEvent) \comma \Time \lred{} 
	\C(\state \semi \ell \semi \zero \semi \SemEvent) \comma \Time + 1
	}
\\
\\
\mathrule{Value\_Send}{
	\sem{\texttt{E}}_{\ell} = v \quad \ell(\A) = A
	}{
	\C(\state \semi \ell \semi \texttt{E} \send \A~ \Sigma \semi \SemEvent )
	\lred{v \send A}
	\C(\state \semi \ell \semi \Sigma \semi \SemEvent )
	} 
\quad
\mathrule{Asset\_Send}{
	\sem{\texttt{E}}_{\ell} = v \quad \ell(\texttt{h}) \geq v  \quad \ell(\A) = A
	}{
	\C(\state \semi \ell \semi \texttt{E} \lolli \texttt{h,}\A~\Sigma \semi \SemEvent )
	\lred{v \lolli A}
	\C(\state \semi \ell[\texttt{h} \mapsto \ell(\texttt{h}) - v] \semi \Sigma \semi
	\SemEvent )
	} 
\\[.6cm]
\mathrule{Field\_Update}{
	\sem{\texttt{E}}_{\ell} = v
	}{
	\C(\state \semi \ell \semi \texttt{E} \send \texttt{x}~\Sigma \semi \SemEvent )
	\lred{}
	\C(\state \semi \ell[\texttt{x} \mapsto v] \semi \Sigma \semi \SemEvent )
	} 
\quad
\mathrule{Asset\_Update}{
	\sem{\texttt{E}}_{\ell} = v \qquad \ell(\texttt{h}) \geq v
	}{
	\begin{array}{l}
	\C(\state \semi \ell \semi \texttt{E} \lolli \texttt{h,h}'~\Sigma \semi \SemEvent )
	\\
	\qquad
	\lred{}
	\C(\state \semi \ell[\texttt{h} \mapsto \ell(\texttt{h})-v, \texttt{h}' \mapsto \ell(\texttt{h}') + v] \semi \Sigma \semi \SemEvent )
	\end{array}
	} 
\\[.6cm]
\mathrule{Cond\_true}{
	\sem{\texttt{B}}_{\ell} = {\it true}
	}{
	\C(\state \semi (\texttt{B})\{ \texttt{S} \}~\Sigma \semi \SemEvent )
	\lred{}
	\C(\state \semi \ell \semi \texttt{S}~\Sigma \semi \SemEvent )
	}
\qquad
\mathrule{Cond\_false}{
	\sem{\texttt{B}}_{\ell} = {\it false}
	}{
	\C(\state \semi (\texttt{B})\{ \texttt{S} \}~\Sigma \semi \SemEvent )
	\lred{}
	\C(\state \semi \ell \semi \Sigma \semi \SemEvent )
	}
\\
\\
\end{array}
\]
}
\caption{\label{tab:sem1}\label{tab:sem2} The transition relation of {\smallang}}
\end{table*}
%
Rule \rulename{Agree} in Table~\ref{tab:sem1} establishes the agreement of the 
parties involved in the contract. This operation is 
a multiparty synchronization, 
where some parties, namely $\vect A_i$, agree on the initial values $\vect v_i$ of the contract's fields $\wt{\texttt{{x$_i$}}}$, for $i\in 1,..n$. The resulting configuration 
moves the contract to the initial state \texttt{Q} and initializes the values of the parties parameters $\vect{\tt A}$ and the contract's fields $\vect{\tt x}$. 
We recall the syntactic conditions of the agreement term, given in the previous section: $(i)$ the sequence $\vect{\tt x}$ is a subset of the contract's fields, $(ii)$ $\bigcup_{i \in 1..n} \vect{{\tt x}_i}$ is a partition of $\vect{{\tt x}}$ and $(iii)$ the sequences of parties $\vect{{\tt A}_i}$ are subsets of the full list $\vect{\tt A}$. 
%
%

Rule \rulename{Function} defines function invocations; the label specifies the address
$A$ performing the invocation and the function name {\tt f} with the actual parameters. The 
transition may occur provided (\emph{i}) the contract is the state \texttt{Q} that admits invocations of
{\tt f} from $A$ and (\emph{ii}) the contract is \emph{idle}, 
\emph{i.e.}~the contract has no statement to execute -- \emph{c.f.}~the left-hand side runtime
contract -- (\emph{iii}) the precondition \texttt{B} is satisfied, and no event can be triggered -- \emph{c.f.}~the premise $\nored{\SemEvent}{\Time}$. In 
particular, this last constraint expresses that events  \emph{have precedence} on
possible function invocations. For example, if a payment deadline is reached and, at the same 
time, the payment arrives, it will be refused in favour of the event managing the deadline.

Rule \rulename{State Change} says that a contract changes state when the statements execution terminates and the sequence of events \texttt{W} is added to the multiset of pending events, up to the evaluations of their time expressions, \emph{i.e.} the occurrences of the 
identifier \texttt{now} are replaced by the current value of the clock. 

Rule \rulename{Event Match} specifies that event handlers may run provided there is no statement 
to perform in the runtime contract, and the time guard of the event has exactly the value 
of the global clock $\Time$. Observe that the timeouts of the events are 
evaluated in an eager way when the event is scheduled -- \emph{c.f.}~rule 
\rulename{State Change} -- not when the event handler is triggered. 
Moreover, the state change performed at the end of the execution of the event 
handler is carried over again by the rule \rulename{State Change}, 
with an empty sequence $\texttt{W}$.

Rule \rulename{Tick} models the elapsing of time. This happens when the contract has 
no statement to perform and no event can be triggered.
Intuitively, the implementation of {\smallang} on top of a blockchain will bind the global clock to the timestamp of the current block.
Therefore a sequence of semantic transitions performed in the same unit of time will correspond to a set of transactions inserted into the same block to be appended to the blockchain.

It is worth to notice that the foregoing rules imply that the complete execution of 
a function call does not affect the global time. This admits the paradoxical phenomenon 
that an endless sequence of function invocations does not make time elapse. While this is possible in theory, it is not in practice, since blocks can only include a finite number of transactions.
Additionally, all the legal contracts we have analyzed are finite state, each state admits a single function invocation, and function invocations update the state in a noncircular way,
thus preventing infinite sequence of function calls. 
An alternative choice would be to adjust the semantics so to increment the clock every 
time a maximal number of functions has been evaluated, thus forcing each block to contain at most a limited number of function invocations. We have preferred to stick to the 
simpler semantics.

%
Table~\ref{tab:sem2} defines transitions due to the execution of statements.  
All these transitions are local to the runtime contract and time does not change.
Therefore, for simplicity, we always omit the clock. 
We only discuss \rulename{Asset\_Send} and 
\rulename{Asset\_Update} because the other rules are standard.
Rule~\rulename{Asset\_Send} returns part of an asset \texttt{h} 
to the party $A$. This part, named $v$, is removed from the asset, \emph{c.f.}~the memory of
the right-hand side runtime contract in the conclusion. In a similar way, 
\rulename{Asset\_Update} moves a part $v$ of an asset \texttt{h} to an asset $\texttt{h}'$.
For this reason, the final memory becomes 
$\ell[\texttt{h} \mapsto \ell(\texttt{h})-v, \texttt{h}' \mapsto \ell(\texttt{h}') + v]$.
We observe that assets, representing physical entities (coins, houses, goods, etc.) are 
never destroyed. 
The condition $\ell(\texttt{h}) \geq v$ in the premises ensures that assets can never 
become negative. 

The semantics of {\smallang} does not consider runtime errors, for instance an attempt to drain too much value from an asset results in a stuck configuration.  We postpone to future work a precise account of runtime failures and contract errors, since it requires a deep interdisciplinary analysis of the legal issues involved in the execution of the exceptional cases.

The initial configuration of a {\smallang} program $\program$ made of a single contract $\C$ is
$
\C(\zero \semi \varnothing \semi \zero \semi \zero) \comma \Time 
$.
The contract is \emph{inactive} as long as no group of addresses has interest to run it, 
\emph{c.f.}~rule \rulename{Agree}. The global clock can be any value, because it corresponds to the absolute time, defined by the timestamp of the current block in the blockchain system.

\paragraph{Example}
Possible initial transitions of the \texttt{Bike\_Rental} contract in 
Example~\ref{ex.BikeRental} are reported in Table~\ref{tab.bikerental}. We assume that the actual names of 
parties are the same as the formal names (therefore we omit the mappings in the memories). 
Let be $\mu_0=((\mathit{Alice,Bob}), (\mathit{Alice,Bob}): (36000,2))$, \emph{i.e.}~Alice and Bob
agree about renting a bike for 2 euro for 1 hour -- the time
is measured in seconds.
Let also be $\ell = [\texttt{Lender} \mapsto \mathit{Alice},\texttt{Borrower} \mapsto \mathit{Bob},\texttt{cost} \mapsto 2, \texttt{time\_limit} \mapsto 3600]$, and
$\ell' = \ell[z \mapsto \mbox{{\tt 123}}, \mbox{{\tt use\_code}}\mapsto \mbox{{\tt 123}}]$ and \texttt{S W} be the body of the function \texttt{accept}.  

%
\begin{table*}[t]
{\small
\[
\mbox{{\tt Bike\_Rental}}(\zero \semi \emptyset \semi \zero \semi \zero) \comma \texttt{0}
\]
\[
\begin{array}{r@{\!}clr}
& \lred{\mu_0} & \mbox{{\tt Bike\_Rental}}(\texttt{Inactive} \semi \ell \semi \zero \semi \zero) \comma \texttt{0} & \mbox{\rulename{Agree}}
\\
& \lred{} &
\mbox{{\tt Bike\_Rental}}(\texttt{Inactive} \semi \ell \semi \zero \semi \zero) \comma \texttt{1}
 & \mbox{\rulename{Tick}}
\\
&\lred{\mathit{Alice}\texttt{:offer(123)}} &
\mbox{{\tt Bike\_Rental}}(\texttt{Inactive} \semi \ell[z \mapsto \mbox{\tt 123}] 
\semi z \send \mbox{{\tt use\_code}} \tostate \mbox{{\tt @Proposal}} \semi \zero) \comma \texttt{1}
& \mbox{\rulename{Function}}
\\
&\lred{} & 
\mbox{{\tt Bike\_Rental}}(\texttt{Inactive} \semi \ell' \semi \zero \tostate 
\mbox{{\tt @Proposal}} \semi \zero) \comma \texttt{1} & \mbox{\rulename{Field\_Update}}
\\
&\lred{} & 
\mbox{{\tt Bike\_Rental}}(\texttt{Proposal} \semi \ell' \semi \zero \semi \zero) \comma \texttt{1}
& \mbox{\rulename{State\_Change}}
\\
& \lred{} &
\mbox{{\tt Bike\_Rental}}(\texttt{Proposal} \semi \ell' \semi \zero \semi \zero) \comma \texttt{2}
& \mbox{\rulename{Tick}}
\\
& \lred{} & 
\mbox{{\tt Bike\_Rental}}(\texttt{Proposal} \semi \ell' \semi \zero \semi \zero) \comma \texttt{3}
& \mbox{\rulename{Tick}}
\\
&\lred{\mathit{Bob}\texttt{:accept[2]}}& 
\mbox{{\tt Bike\_Rental}}(\texttt{Proposal} \semi \ell'[y \mapsto \mbox{{\tt 2}}] \semi \texttt{S W} \tostate
\mbox{{\tt @End}} \semi \zero) \comma \texttt{3}
& \mbox{\rulename{Function}}
\end{array}
\]}
\caption{\label{tab.bikerental} Initial transitions of {\tt Bike\_Rental}}
\end{table*}

\noindent
To sum up, a legal contract behaves as follows: 
\begin{enumerate}
\item 
the first action is always an agreement, which moves the contract to an idle state;
\item 
in an idle state, fire any ready event with a matching state. If there is one, 
execute its handler until the end, which is an idle state;
\item 
if there is no event to be triggered in an idle state, \emph{either} tick \emph{or}
call any permitted function (\emph{i.e.}~with matching state and preconditions). A function
invocation amounts to execute its body until the end, which is again an idle state.
\end{enumerate}

Therefore, we observe that {\smallang} has three sources of nondeterminism: 
(\emph{i}) the order of the execution of ready event handlers, 
(\emph{ii}) the order of the calls of permitted functions, and (\emph{iii}) the delay of 
permitted function calls to a later time (thus, possibly, after other event handlers).
For example, the contract  $\C$ with two functions \texttt{@Q A:f\{\zero\}$\tostate$@Q} 
and \texttt{@Q A$'$:g\{\zero\}$\tostate$@Q} behaves as either
$\lred{A\texttt{:f}} \lred{}_n \lred{A'\texttt{:g}}$ or 
$\lred{A'\texttt{:g}} \lred{}_n \lred{A\texttt{:f}}$, where $\lred{}_n$ are transitions
that make the time elapse (rule \rulename{Tick}).
As another example, consider a contract $\C'$ with a function 
\texttt{@Q A:f \{\zero \  now $\event$\,@Q$'$\{\,"hello"$\send$A\,\}$\tostate$@Q$'$\ \}$\tostate$@Q} 
and a function \texttt{@Q A$'$:g \{\zero\}$\tostate$@Q$'$}. Then it may 
behave as either
$\lred{A\texttt{:f}} \lred{A'\texttt{:g}} \lred{\texttt{''hello''} \send \A}$ or as
$\lred{A\texttt{:f}}\lred{}_n \lred{A'\texttt{:g}}$, after which the action 
$\lred{\texttt{''hello''} \send A}$ is disabled,
or as $\lred{A'\texttt{:g}}$, which precludes the call of \texttt{f}.

\paragraph{Remark.}
The semantics of {\smallang} may be easily extended to configurations with several smart legal contracts.
It is sufficient to consider configurations as consisting of sets of runtime contracts and to
change the rule \rulename{Tick}. To illustrate the general case,
let $\contract = \C_1(\State_1\semi\ell_1\semi\Sigma_1\semi\SemEvent_1), \cdots, \C_n(\State_n\semi\ell_n\semi\Sigma_n\semi\SemEvent_n)$. 
We define 
$\Events{\contract} \eqdef \SemEvent_1 ~|~ \cdots ~|~ \SemEvent_n$. The new rule \rulename{Tick} becomes
\[
\mathrule{Tick+}{
	\nored{\Events{\contract}}{\Time}
	}{
	\contract \comma \Time \lred{} 
	\contract \comma \Time + 1
	} 
\]

\section{{\smallang} laws and Equational Theory}
\label{sec:laws}

The operational semantics of Tables~\ref{tab:sem1} and~\ref{tab:sem2} is too intensional
because it defines the behaviour of a legal contract without showing any evidence of the 
differences between contracts. Nevertheless, it is the base
for defining a more appropriate, extensional semantics using a standard technique
based on observations~\cite{MilnerBook}.
According to this technique, two contracts cannot be separated if a party using them
cannot distinguish one from the other. Said otherwise, a party 
can differentiate two contracts if he can \emph{observe different interactions}. 
It turns out that defining the observations is a critical point of the overall 
technique, because it allows to fine-tune the discriminating power of the
extensional semantics.
The appropriate observations for smart legal contracts match the design principles of {\smallang}:
let $A$ be a party, then
\begin{itemize}
\item
$A$ should observe the \emph{agreement} that he is going to sign, because stipulating a
different agreement may be unsatisfactory; therefore we envisage the observation $(A:\vect v)$, that observes the exact set of values $\vect v$ the party $A$ agreed about.
\item
$A$ should also observe the \emph{permission} or the \emph{prohibition} to 
invoke a functionality at a given time $\Time$, \emph{i.e.}~whether 
$A : \texttt{f}(\vect{v})[\vect{w}]$ is possible at $\Time$ or not;
\item
$A$ should finally observe whether at $\Time$ he can \emph{receive}
a value or an asset, \emph{i.e.}~whether $v \send A$ or $v \lolli A$ are possible 
at $\Time$ or not. 
\end{itemize}
The ordering of invocations and receives can be safely overlooked, as long as they belong 
to the same block of transactions, that is they are executed at the same global time.
Notice also that the above observations allow a party to observe contract's 
\emph{obligations}. Indeed, by shifting the observation at a specific point in time, 
one can observe the effects of executing the event that encodes a legal commitment, 
such as the issue of a sanction or the impossibility to do further actions.
On the other hand, the foregoing notion of observations abstracts away from 
the names of the contract's assets and internal states.

We will use the following notations:
\begin{itemize}
\item Let be $\alpha_1= (\vect A,\, {\vect A_1}{:} \vect v_1 \cdots {\vect A_n}{:}\vect v_n)$ and $\alpha_2= (\vect B,\, {\vect B_1}{:} \vect w_1 \cdots {\vect B_n}{:}\vect w_n)$ then we write $\alpha_1 \sim \alpha_2$ if $\vect A$ and $\vect B$ are equal up to reordering of sequences, and similarly for  ${\vect A_i}: \vect v_i$ and $\vect B_j:\vect v_i$.

\item
Let be $\Lred{\mu} \; \eqdef \; \Lred{}\lred{\mu} \Lred{}$, where $\Lred{}$ stands for any number of $\lred{}$ transitions, possibly zero.
\end{itemize}

%
The following definition of legal contract equivalence compares the observable behavior of contracts. It is defined over configurations, so to appropriately shift the time of the contract's observations. The equivalence is defined as a suitable bisimulation game
that is consistent with the idea that in blockchain systems the interactions are batched in blocks of transactions.

\begin{definition}[Legal Bisimulation]
\label{def.simulation}
A symmetric relation $\mathcal{R}$ is a 
legal bisimulation between two configurations at time $\Time$,
written $\contract_1 {\comma} \Time \ \ \mathcal{R}\ \ \contract_2 {\comma} \Time$, whenever
\begin{enumerate}
\item
if $\contract_1 {\comma} \Time \Lred{\alpha} \contract_1' {\comma} \Time$ then 
$\contract_2 {\comma} \Time \Lred{\alpha'} \contract_2' {\comma} \Time$ for some
$\alpha'$ such that
  $\alpha\sim\alpha'$ and 
$\contract_1' {\comma} \Time \ \ \mathcal{R} \ \ \contract_2', \Time$;
\item
if $\contract_1 {\comma} \Time \Lred{\mu_1} \cdots \Lred{\mu_n} \contract_1', \Time \lred{}\contract_1', \Time+1$
then there exist $\mu_1' \cdots \mu_n'$ that is a permutation of $\mu_1 \cdots \mu_n$ 
such that
$\contract_2 {\comma} \Time \Lred{\mu_1'} \cdots \Lred{\mu_n'} \contract_2'{\comma}\Time
 \lred{} \contract_2'{\comma} \Time+1$  and $\contract_1' {\comma} \Time+1 \ \mathcal{R} \ \contract_2' {\comma} \Time+1$.
\end{enumerate}
Let $\simeq$ be the largest legal bisimulation, called \emph{bisimilarity}.
When the initial configurations of contracts $\C$ and $\C'$ are bisimilar, 
we simply write $\C \simeq \C'$.
\end{definition}

Being a symmetric relation, a legal bisimulation compares both 
contracts' permissions and prohibitions: if $\C$ permits an action 
(\emph{i.e.}~exhibits an observation), then $\C'$ must permit the same action, 
and if $\C$ prohibits an action (\emph{i.e.} does not exhibit a function call or an external 
communication), then also $\C'$ must not exhibit the corresponding observation. 
Moreover, the bisimulation game enforces a \emph{transfer property}, that is it
shifts the time of observation to the future, 
so to capture and compare the changes of permissions/prohibitions and the (future) 
obligations. 
Observe that the equivalence abstracts away the ordering of the observations within the same time clock, since in a blockchain there is no strong notion of ordering between the transactions contained in the same block. 
Nevertheless, specific orderings of function invocations are important in {\smallang}
contracts and the equivalence cannot overlook essential precedence constraints.
For instance, the requirement that a function delivering a service 
can only be invoked after another specific function, say a payment. 
This is indeed the case for the legal bisimulation. 
To explain, consider the contract  $\C$ with two functions 
\texttt{@Q A:f\{\zero\}$\tostate$@Q} and \texttt{@Q A$'$:g\{\zero\}$\tostate$@Q} 
and the contract $\C'$ 
with two functions \texttt{@Q A:f\{\zero\}$\tostate$@Q$'$} 
and \texttt{@Q$'$ A$'$:g\{\zero\}$\tostate$@Q}. In $\C$ the functions can be called in any order, while in $\C'$ the function \texttt{g} can be invoked only after \texttt{f}.
Accordingly, $\C\not\simeq\C'$ since there is a runtime configuration of $\C$ that 
exhibits the observation $\Lred{A' : {\rm g}}$, while it is not the case for 
the contract $\C'$, since at any time it can only exhibit 
$\Lred{A : {\rm f}}\Lred{A' : {\rm g}}$.

The following theorem highlights the property that the internal state of the contract is abstracted away by the extensional semantics, which only observes the external contract's behavior.
The proof is omitted because it is standard.

\begin{theorem}[Internal refactoring]\label{thm:rename}
Let $\C$ and $\C'$ be two contracts that are equal up-to a bijective renaming of states.
Then $\C \simeq \C'$. Similarly, for bijective renaming of assets, fields and contract names.
\end{theorem}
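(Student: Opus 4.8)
The plan is to exhibit an explicit legal bisimulation induced by the renaming, rather than to reason about $\simeq$ abstractly. Fix the bijection $\sigma$ witnessing the renaming (say, on state names; the remaining cases are treated uniformly at the end), and extend it to act on runtime contracts by relabelling every occurrence of a renamed name, so that $\sigma$ sends $\C(\State \semi \ell \semi \Sigma \semi \SemEvent)$ to the corresponding $\C'$-component with all states replaced according to $\sigma$. I would then take as candidate the relation $\mathcal{R}$ that pairs $\contract \comma \Time$ with $\sigma(\contract) \comma \Time$ for every runtime contract $\contract$ and every time $\Time$, closed under symmetry; symmetry is immediate since $\sigma$ is a bijection and $\sigma^{-1}$ witnesses the converse direction.

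The conceptual heart of the proof --- and the reason the theorem holds --- is that the grammar of labels never mentions states, assets, fields, or contract names: a label $\mu$ is only an agreement $(\vect A, \vect A_i : \vect v_i)$, a call $A : \texttt{f}(\wt u)[\wt v]$, or a transfer $v \send A$ or $v \lolli A$, all built from party names, \emph{unrenamed} function names, and values. Hence $\sigma$ acts as the identity on labels. The key technical step is then an equivariance (commutation) lemma: for every label $\mu$,
\[ \sol \lred{\mu} \sol' \qquad\Longleftrightarrow\qquad \sigma(\sol) \lred{\mu} \sigma(\sol'), \]
\emph{with the same label} $\mu$. Matching a move on one side by the same move on the other then makes the side-condition $\alpha \sim \alpha'$ of Definition~\ref{def.simulation} hold trivially with $\alpha' = \alpha$.

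I would prove the commutation lemma by induction on the derivation, checking each rule of Tables~\ref{tab:sem1} and~\ref{tab:sem2}. The only points requiring care concern the auxiliary machinery: one verifies that expression evaluation is preserved, $\sem{\texttt{E}}_{\ell} = \sem{\sigma(\texttt{E})}_{\sigma(\ell)}$ (for state renaming this is immediate, since states occur neither in the memory $\ell$ nor in expressions; for asset and field renaming it follows because $\sigma$ renames memory keys and the variables of $\texttt{E}$ coherently), and that the predicate $\nored{\SemEvent}{\Time}$ is invariant under $\sigma$, because it inspects only the already-evaluated time guards $\Time_i$, which $\sigma$ leaves untouched. Granting this, every rule transports verbatim: \rulename{Agree}, \rulename{Function}, \rulename{Value\_Send} and \rulename{Asset\_Send} emit $\sigma$-invariant labels, while all other rules are silent and merely relabel the state, residual and event components.

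It remains to discharge the two clauses of Definition~\ref{def.simulation} directly from the lemma. A weak labelled transition $\Lred{\alpha}$ on one side is matched by its $\sigma$-image, which is again a weak transition with the identical label and lands in an $\mathcal{R}$-related pair, giving clause~1; a maximal block of moves $\Lred{\mu_1} \cdots \Lred{\mu_n}$ closed by a single \rulename{Tick} is matched by the \emph{identity} permutation of the same moves, again preserving $\mathcal{R}$, giving clause~2. Since the initial configurations $\C(\zero \semi \varnothing \semi \zero \semi \zero) \comma \Time$ and $\C'(\zero \semi \varnothing \semi \zero \semi \zero) \comma \Time$ are plainly $\mathcal{R}$-related, we conclude $\C \simeq \C'$. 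The cases of assets, fields and contract names are identical once one observes that these names, like states, are absent from every label. I expect no genuine obstacle here: the whole argument is routine name-bookkeeping, and the only real content is the equivariance lemma, whose proof is an entirely mechanical rule-by-rule check enabled by the name-freeness of labels.
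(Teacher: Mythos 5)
The paper offers no proof of this theorem at all --- it is explicitly omitted as ``standard'' --- and your argument is precisely the standard equivariance argument the paper alludes to: exhibit the renaming-induced relation as a legal bisimulation, the key observation being that transition labels are built only from party names, function names and values, so the renaming $\sigma$ acts as the identity on labels and commutes with every rule of the semantics. Your proof is correct, matches the intended (unwritten) argument, and in fact supplies the detail the paper leaves out; nothing to object to.
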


The theorem could be extended to contracts equal up-to the number of contract's fields and assets, as long as their external behavior is the same. We keep for future work a precise formalization of the internal refactoring allowed by the observational equivalence.

Bisimilarity is also independent from \emph{future} clock values. This 
allows us to garbage-collect events that cannot be triggered 
anymore because the time for their scheduling is already elapsed. 

\begin{theorem}[Time shift]\label{thm:time}
\mbox{ }
\begin{enumerate}
\item 
If \, $\contract {\comma} \Time\simeq\contract' {\comma} \Time$ 
and $\Time \leq \Time'$, then $\contract {\comma}\Time'\simeq\contract' {\comma} \Time'$.
\item 
If \, $\Time < \Time'$ then $\C(\state \semi \ell \semi \Sigma \semi \SemEvent~|~\Time \event \texttt{@}\state~\{~\texttt{S}~\} \tostate \texttt{@}\state') \,{\comma}\, \Time' \ \simeq\ \C(\state \semi \ell \semi \Sigma \semi \SemEvent ) \,{\comma}\, \Time'$.
\end{enumerate}
\end{theorem}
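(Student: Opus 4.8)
The plan is to prove both statements coinductively: for each I would exhibit a symmetric relation $\mathcal{R}$ containing the pair in question and check that it satisfies the two clauses of Definition~\ref{def.simulation}, so that $\mathcal{R}\subseteq{\simeq}$. I would establish the second statement first, since it isolates the inertness of expired events, and then reuse it to discharge the delicate case of the first one.

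For the second statement, let $e = \Time\event\texttt{@}\state\{\texttt{S}\}\tostate\texttt{@}\state'$ be the expired event and take
\[
\mathcal{R} \ =\ \{\,(\C(\State\semi\ell\semi\Sigma\semi\SemEvent~|~e)\comma\Time''\,,\ \C(\State\semi\ell\semi\Sigma\semi\SemEvent)\comma\Time'') \ :\ \Time<\Time''\,\}
\]
closed under symmetry. The key observation is that $e$ is \emph{inert} at every clock $\Time''>\Time$: its guard $\Time$ was fixed eagerly by \rulename{State Change} and is never equal to the current clock (which only grows and already exceeds $\Time$), so \rulename{Event Match} can never select $e$; and since $\Time\neq\Time''$ we have $\nored{\SemEvent~|~e}{\Time''}$ iff $\nored{\SemEvent}{\Time''}$, so $e$ neither enables nor disables \rulename{Tick} and \rulename{Function}. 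Hence every rule fires identically on the two components, scheduling identical new events through \rulename{State Change}, the two runtime contracts proceed in lockstep, and both clauses hold with the identity matching.

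For the first statement I would reduce to a single tick by induction on $\Time'-\Time$, so that it suffices to show that $\contract\comma\Time\simeq\contract'\comma\Time$ implies $\contract\comma\Time+1\simeq\contract'\comma\Time+1$. When both $\contract$ and $\contract'$ are idle with $\nored{\SemEvent}{\Time}$, the step $\contract\comma\Time\lred{}\contract\comma\Time+1$ is exactly an instance of clause~2 of Definition~\ref{def.simulation} with an empty sequence of labels, so $\simeq$ itself immediately yields $\contract\comma\Time+1\simeq\contract'\comma\Time+1$ (the matching side can only tick, as any function call would add a label). In general I would take $\mathcal{R} = \{\,(\contract\comma\Time'\,,\ \contract'\comma\Time') : \Time\le\Time',\ \contract\comma\Time\simeq\contract'\comma\Time\,\}$, combined with the pairs related by the second statement, and verify the clauses by transporting each move available at the bumped clock back to the game at $\Time$: same-time moves are mirrored via clause~1, noting that a rule's enabledness depends on the clock only through the test $\nored{\SemEvent}{\Time}$, the exact guard match of \rulename{Event Match}, and the $\subst{\Time}{\texttt{now}}$ substitution of \rulename{State Change}, while the crossing of a tick is supplied by clause~2. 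The now-substitution is harmless because the clock is bumped by the same amount on both sides, so all freshly scheduled guards shift uniformly and the relative timing equated by $\simeq$ at $\Time$ is preserved.

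The main obstacle is the treatment of events whose guard lies in the interval $[\Time,\Time')$: in the bisimulation game at clock $\Time$ these events are eventually \emph{fired}, with their side effects on state and memory and their observable sends, whereas at clock $\Time'$ they are already expired and are simply \emph{skipped}. Reconciling the two relies on the second statement to discard them on both sides at clock $\Time'$, together with the fact that any observation they could produce carries an absolute timestamp strictly below $\Time'$, hence below the observation horizon of the shifted game. The careful point, where the uniform-shift argument and clause~2 must be combined, is to check that the skipped state changes leave $\contract$ and $\contract'$ in configurations that are still related, rather than merely that the skipped \emph{observations} coincide.
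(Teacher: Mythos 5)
The paper states this theorem without any proof, so there is nothing to compare your attempt against; it has to be judged on its own merits. On those merits: your proof of item~2 is correct. The relation you exhibit is indeed a legal bisimulation, for exactly the reasons you give: an event whose guard is strictly below the clock can never be selected by \rulename{Event Match} (guards are tested for \emph{equality} with a clock that only grows), and since its guard differs from every later clock value it never perturbs the premise $\nored{\SemEvent}{\Time}$ of \rulename{Tick} and \rulename{Function}; hence the two components move in lockstep and stay inside the relation.

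Item~1 is where the proposal fails, and it fails exactly at the point you flag as ``the careful point'' and then leave unchecked: that check cannot be carried out, because the single-tick step your induction rests on is false. Counterexample: let $e = \Time \event \texttt{@Q}~\{~\zero~\} \tostate \texttt{@Q1}$, let $\C$ declare the single function $\texttt{@Q}~\A : \texttt{g}~\{~\zero~\}\tostate\texttt{@Q}$, let $\C'$ declare no functions, and consider $\contract = \C(\texttt{Q}\semi\ell\semi\zero\semi e)$ and $\contract' = \C'(\texttt{Q}\semi\ell\semi\zero\semi e)$ with $\ell(\A)=A$. At clock $\Time$ the event is ready, so on \emph{both} sides the premise $\nored{e}{\Time}$ of \rulename{Function} and \rulename{Tick} fails: the only enabled transition is \rulename{Event Match} followed by \rulename{State Change}, after which both contracts sit idle in $\texttt{Q1}$ with no functions and can only tick. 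Neither side ever exhibits a label, so $\contract\comma\Time \simeq \contract'\comma\Time$. But at clock $\Time+1$ the event is expired on both sides, $\nored{e}{\Time+1}$ holds, and the call $A{:}\texttt{g}$ becomes enabled in $\contract$ while $\contract'$ still has no labelled move whatsoever: $\contract\comma\Time{+}1 \not\simeq \contract'\comma\Time{+}1$.

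The moral is that bisimilarity at $\Time$ constrains behaviour only along runs in which ready events actually \emph{fire}; it says nothing about the configurations obtained by externally bumping the clock past their guards, where the skipped handlers' state changes never take place and previously masked capabilities (here, $\texttt{g}$ at $\texttt{Q}$) become observable. Your appeal to item~2 cannot bridge this: item~2 only removes expired events as garbage, whereas the damage is done by the difference between the state reached by firing and the state kept by skipping. So item~1 can only be established under an additional hypothesis, e.g.\ that neither configuration holds a pending event with guard in the interval $[\Time,\Time')$; relatedly, your uniform-shift remark is also inexact as stated, since already-evaluated absolute guards do not shift with the clock while freshly scheduled \texttt{now}-relative ones do, so relative firing orders are not in general preserved by a clock bump.
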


We finally put forward a set of algebraic laws that formalize the fact that the 
ordering of remote communications can be safely overlooked, as long as they belong 
to the same transaction.
The laws are defined over statements, therefore, 
let $\C[~]$ be a \emph{context}, that is a contract that contains an hole where a statement may occur. We write $\S \simeq \S'$ if, for every context 
$\C[~]$, $\C[\S] \simeq \C[\S']$.

\begin{theorem}
\label{thm.noninterferencelaws}
The following non-interference laws hold in {\smallang}
(whenever they are applicable, we assume $\texttt{x} \notin \fv{\texttt{E}'}$
and $\texttt{x}' \notin \fv{\texttt{E}}$ and $\texttt{h} \notin \fv{\texttt{E}'}$
and $\texttt{h}' \notin \fv{\texttt{E}'}$ and $\texttt{h}'' \notin \fv{\texttt{E}}$
and $\texttt{h}''' \notin \fv{\texttt{E}}$):
\[
\begin{array}{@{\hspace{-.0cm}}r@{\quad}c@{\quad}l@{\quad}l}
\texttt{E} \send \A \; \; \texttt{E}' \send \A' & \simeq & \texttt{E}' \send \A' \; \; \texttt{E} \send \A
& 
\\
\texttt{E} \send \texttt{x} \; \; \texttt{E}' \send \A & \simeq & \texttt{E}' \send \A \; \; 
\texttt{E} \send x
&  
\\
\texttt{E} \send \texttt{x} \; \; \texttt{E}' \send x' & \simeq & \texttt{E}' \send \texttt{x}' \; \; \texttt{E} \send \texttt{x}
&  
\\
\texttt{E} \lolli \texttt{h},\A \; \; \texttt{E}' \send \A' & \simeq & \texttt{E}' \send \A' \; \; \texttt{E} \lolli \texttt{h},\A
&  
\\
\texttt{E} \lolli \texttt{h},\A \; \; \texttt{E}' \send \texttt{x}' & \simeq & \texttt{E}' \send \texttt{x}' \; \; \texttt{E} \lolli \texttt{h},\A
&  
\\
\texttt{E} \lolli \texttt{h},\texttt{h}' \; \; \texttt{E}' \send \A & \simeq & \texttt{E}' \send \A \; \; \texttt{E} \lolli \texttt{h},\texttt{h}'
& 
\\
\texttt{E} \lolli \texttt{h},\texttt{h}' \; \; \texttt{E}' \send \texttt{x}' & \simeq & \texttt{E}' \send \texttt{x}' \; \; \texttt{E} \lolli \texttt{h},\texttt{h}'
& 
\\
\texttt{E} \lolli \texttt{h},\A \; \; \texttt{E}' \lolli \texttt{h}'',\A' & \simeq & \texttt{E}' \lolli \texttt{h}'',\A' \; \; \texttt{E} \lolli \texttt{h},\A
&  
\\
\texttt{E} \lolli \texttt{h},\A \; \; \texttt{E}' \lolli \texttt{h}'',\texttt{h}''' & \simeq & \texttt{E}' \lolli \texttt{h}'',\texttt{h}''' \; \; \texttt{E} \lolli \texttt{h},\A
&  
\\
\texttt{E} \lolli \texttt{h},\texttt{h}' \; \; \texttt{E}' \lolli \texttt{h}'',\texttt{h}''' & \simeq & \texttt{E}' \lolli \texttt{h}'',\texttt{h}''' \; \; \texttt{E} \lolli \texttt{h},\texttt{h}'
&  
\end{array}
\]
\end{theorem}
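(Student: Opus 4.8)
The plan is to reduce all ten laws to a single local commutation property at the level of the statement transitions of Table~\ref{tab:sem2}, and then lift it to bisimilarity through an arbitrary context by exploiting the permutation freedom of clause~2 of Definition~\ref{def.simulation}.

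First I would prove a \emph{commutation lemma}: for any of the pairs $\alpha,\beta$ appearing in the laws, any state $\state$, any pending events $\SemEvent$ and any memory $\ell$, the configuration $\C(\state \semi \ell \semi \alpha\,\beta\,\Sigma \semi \SemEvent)$ completes the two swapped statements (no guard failing) if and only if $\C(\state \semi \ell \semi \beta\,\alpha\,\Sigma \semi \SemEvent)$ does, and in that case both reach the \emph{same} memory $\ell''$ and emit the \emph{same multiset} of labels. This is a finite case analysis on the shapes of $\alpha$ and $\beta$. The key observation is that each statement acts on $\ell$ by a fixed increment/decrement of one or two cells (rules \rulename{Field\_Update}, \rulename{Asset\_Update}, \rulename{Asset\_Send}) and optionally emits a label (rules \rulename{Value\_Send}, \rulename{Asset\_Send}), while value sends and field updates carry no guard. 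The freshness side conditions ($\texttt{x}\notin\fv{\texttt{E}'}$, $\texttt{h}\notin\fv{\texttt{E}'}$, and so on) guarantee that the side effect of one statement cannot alter the value the other computes, so $\sem{\texttt{E}}_\ell$ and $\sem{\texttt{E}'}_\ell$ evaluate to the same $v,v'$ in both orders. Memory equality then follows from commutativity and associativity of the integer additions and subtractions applied to each cell, even when asset names coincide, since a cell touched by both statements simply accumulates $v$ and $v'$ independently of order.

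The delicate point is executability in the asset cases, since the semantics has no error handling and a failed guard freezes the contract. I would check that in every asset law the two guards are jointly satisfiable under a \emph{symmetric} condition: for a double drain of the same source asset $\texttt{h}$ both orders require exactly $\ell(\texttt{h}) \geq v + v'$, and when a source and a target coincide across the two statements the constraint again reduces to a symmetric sum. Hence the two orders are simultaneously enabled or simultaneously stuck. In the stuck case the residual is non-empty, so neither \rulename{Tick} nor \rulename{Function} can fire and the contract is frozen; such an execution can therefore never be part of a sequence that ends in a tick, and is consequently \emph{invisible} to clause~2 of the bisimulation. This is exactly why the (at most two) differing partial observations produced before freezing need not be matched.

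Finally I would lift the lemma to $\simeq$. Fix an arbitrary context $\C[~]$ and let $\mathcal{R}$ relate the reachable configurations of $\C[\alpha\,\beta]$ and $\C[\beta\,\alpha]$ that share the same $\state$, $\ell$, $\SemEvent$ and clock and whose residuals coincide once the swapped statements have been consumed, closed under the identity on all other configurations. Clause~1 is immediate, because the two contracts carry the same agreement. For clause~2, take any label sequence of one side terminating in a \rulename{Tick}. Since $\C[\alpha\,\beta]$ and $\C[\beta\,\alpha]$ differ only inside the body of the function containing the hole, every step is matched identically except the at most two labels emitted by $\alpha$ and $\beta$ when that function is invoked and \emph{succeeds}; these occur in swapped positions and are absorbed by the permutation allowed in clause~2, while by the commutation lemma the configuration reached at the tick has identical memory, residual and events on both sides, so the game returns into $\mathcal{R}$. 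Invocations that get stuck never reach a tick and are excluded as argued above. Thus $\mathcal{R}$ is a legal bisimulation and each law follows. The main obstacle is the bookkeeping in the commutation lemma's asset subcases with coinciding names, both for memory equality and for the symmetry of the enabling guards.
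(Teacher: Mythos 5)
Your lifting strategy is essentially the paper's own: the paper proves (only) the first law by exhibiting a relation that pairs configurations of $\C[\S_1]$ and $\C[\S_2]$ which are identical except for the swapped statement (also when it sits inside pending events), observing that the pair is always consumed within a single clock value, and letting the permutation built into clause~2 of Definition~\ref{def.simulation} absorb the two transposed labels. Your treatment of clause~1, your use of the permutation in clause~2, and your remark that an execution stuck inside a function body can never reach a tick --- hence is never tested by clause~2 --- all match or correctly refine that argument. What you add is the uniform commutation lemma intended to dispose of the nine remaining laws, which the paper leaves implicit; that is the right idea, but one of its claims is wrong.

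The flawed step is the assertion that the two orders are ``simultaneously enabled or simultaneously stuck'' even \emph{when a source and a target coincide across the two statements}. That case is not symmetric. Take the last law with $\texttt{h}''' = \texttt{h}$, $\texttt{E} = 7$, $\texttt{E}' = 3$ (constants, so every freshness condition of the theorem holds), $\ell(\texttt{h}) = 5$ and $\ell(\texttt{h}'')\geq 3$. Rule \rulename{Asset\_Update} guards only the source: executing $\texttt{E} \lolli \texttt{h},\texttt{h}'$ first requires $\ell(\texttt{h}) \geq 7$, which fails, so that order freezes at its first statement; executing $\texttt{E}' \lolli \texttt{h}'',\texttt{h}$ first raises $\texttt{h}$ to $8$, after which the drain succeeds and the contract returns to an idle state. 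The completing side can then tick and keep exhibiting observations (even the zero-label tick sequence of clause~2) that the frozen side can never match, since \rulename{Tick} requires an empty residual. So under this reading the commutation lemma --- and indeed the law itself --- is false; the symmetric-sum argument only covers a shared \emph{source} ($\texttt{h} = \texttt{h}''$, where both orders need $\ell(\texttt{h}) \geq v + v'$), not a source/target mix. The repair is not to argue symmetry but to add the hypothesis that the updated names are pairwise distinct (or at least that no asset is the source of one statement and the target of the other); the theorem's stated side conditions constrain only the expressions $\texttt{E}, \texttt{E}'$ and do not exclude these coincidences. With that hypothesis the guards of the two statements concern disjoint cells, your memory-equality argument by commutativity of addition stands, and the rest of your proof goes through.
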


\begin{proof}
We prove the first equality. 
Let be $\S_1=\texttt{E} \send \A \; \; \texttt{E}' \send \A'$
and $\S_2=\texttt{E}' \send \A' \; \; \texttt{E} \send \A$, and let be $\C_1=\C[\S_1]$ and $\C_2=\C[\S_2]$,  then we need to prove that
$\C_1(\zero,\varnothing,\zero,\zero){\comma}\Time \simeq \C_2(\zero,\varnothing,\zero,
\zero){\comma}\Time$. 
Let also 
$\contract_i=\C_i(\State,\ell,\zero,\Psi[\S_i])$, with $i=1,2$, be the runtime contract 
where the statement $\S_i$ occurs within a number of handlers of future events.
 We demonstrate
that the symmetric closure of the following relation is a legal bisimulation:
$$
\begin{array}{l}
\{\, \bigl(\C_1(\zero,\varnothing,\zero,\zero){\comma}\Time \ ,\ \C_2(\zero,\varnothing,\zero,\zero){\comma}\Time \bigr) \, \} 
\\[2mm]
\cup 
\, \{\, \bigl(\C_1(\texttt{Q},\ell,\zero,\Psi[\S_1]){\comma}\Time' \ ,\ \C_2(\texttt{Q},\ell,\zero,\Psi
[\S_2]){\comma}\Time'\bigr) 
\quad \; |~ \; \text{for every } \texttt{Q}, \ell, \Psi[~], \Time' \; \}
\end{array}
$$
Indeed, notice that the statement $\texttt{E} \send \A \; \; \texttt{E}' \send \A'$ can 
only contribute to the behavior of $\C$ with a couple of transitions during the evaluation of the body of a function or the evaluation of an event handler. Therefore the statement must be completely executed with the same time clock, possibly a number $k$ of times due to the multiple function calls and event handlers that are executed during the same time clock.

Formally, if $\contract_1,\Time' \Lred{\mu_1}\cdots\Lred{\mu_n}\contract_1',\Time'\lred{}\contract_1',\Time'+1$, then the sequence $\mu_1\cdots\mu_n$ contains $k$ occurrences of the pair $v\send\A, v'\send\A'$. Similarly, there exist $\mu'_1\cdots\mu'_n$ and a configuration 
$\contract_2',\Time'$ such that
$\contract_2,\Time' \Lred{\mu'_1}\cdots\Lred{\mu'_n}\contract_2',\Time'
\lred{}\contract_2',\Time'+1$, where the sequence $\mu'_1\cdots\mu'_n$ 
is identical to $\mu_1\cdots\mu_n$ but for the $k$ occurrences of the pair 
$v\send\A, v'\send\A'$ that has been swapped into $v'\send\A', v\send\A$. The argument 
also holds in the converse direction.
\end{proof}


\section{Towards a distributed implementation}
\label{sec:expressivepower}

The definition of {\Stipula} is implementation-agnostic, that is it can be either
executed as a centralized application or it can be run on top of a distributed system, 
such as a blockchain. 
Implementing {\Stipula} in terms of smart contracts, \emph{e.g.}~{\Solidity}, 
would bring in the advantages of a public and decentralized blockchain platform. These include the 
benefits coming form the fact that several governments have recently recognised  that  smart contracts and, more generally, programs operating over distributed ledgers, have legal 
value~\cite{WyomingAct,ItalianLaw2019,MaltaAct}. 
However, {\Stipula}'s contracts
are more general and encompass smart contracts: they provide benefits in terms of 
automatic execution and enforcement of contractual conditions, 
traceability, and outcome certainty even without using a blockchain. 
In particular, running a legal contract over a secured centralized system 
allows for more efficiency, energy save, additional privacy. Moreover, 
a controlled level of intermediation can better monitor the contract enforcement, 
dealing with disputes between contract's parties and carring out judicial 
enforcements.
Finally, the intrinsic open nature of legal contracts is another challenge 
for smart contracts, that can hardly deal with the off-chain world: external data 
enter the blockchain only through oracles, which are problematic in many senses, 
and the dynamic change of behaviours conflicts with the rigidity of smart contracts 
definition. Time is another big issue in blockchains.

%
While we are currently prototyping {\Stipula} on a centralized system, we think that 
a distributed implementation on a blockchain system is interesting. 
%
%
Below we discuss the issues of prototyping {\Stipula}, 
either on top of a centralized system or a distributed system like a (public) blockchain. 

\paragraph{Functions and States.} {\Stipula} contracts are 
very close to class definitions, therefore sticking to an object-oriented target language
as Java or Solidity 
would ease the prototyping effort. In this context, the state-aware programming 
is also well developed. In particular, several smart contract languages widely use
the state machine pattern (\emph{c.f.}~{\Solidity}~\cite{SoliditySM} and {\sf Obsidian}~\cite{Obsidian}). 

%


\paragraph{Parties.}
The implementation of {\smallang} must carefully handle digital identities, ensuring that a contract's function is actually invoked by the correct caller. In blockchain implementation this corresponds to the \emph{externally owned accounts}, but the pseudoanonimity provided by \emph{public} blockchains might be a limitation, since legal contract's parties must have a trusted identity, especially in the case of authorities, which must be bound to parties that are trusted intermediaries.

\paragraph{Agreements.} The agreement code  
is technically a \emph{join synchronization} that expresses a consensus between the 
parties to start the contract with particular values of the (non-linear) fields. 
This construct can be implemented by resorting to a barrier-like protocol, 
where each party \texttt{A$_i$} may call, in whatever order, a specific function 
to propose the values he agrees on, and the barrier eventually checks the consistency 
of the proposed values before moving the contract to the initial state. 
The following snippet of {\Solidity}-like code 
(a similar code can be written also in Java RMI)
corresponds to the agreement code (see Section~\ref{sec:smartlegalcontracts}), where we 
assume that the fields of the contract are 
{\tt x$_1$}, $\cdots$, {\tt x$_k$} with types {\tt T$_1$},$\cdots$,{\tt T$_k$}, respectively. 
We also assume that $\vect{{\tt A}_i} : \vect{{\tt x}_i}$ is defined as
${{\tt A}_i}^1, \cdots , {{\tt A}_i}^{s\_i} : {{\tt x}_i}^1, \cdots , {{\tt x}_i}^{s\_i'}$.

{\footnotesize
\begin{lstlisting}[mathescape,basicstyle=\ttfamily]
  address A$_1$, ... , A$_n$ ;
  T$_1$ x$_1$; ... ; T$_k$ x$_k$ ;
  enum State {Nothing, Q} 
  State state = State.Nothing ;        
  int counter = 1 ;        

  T${_i}^1$ aux$_i\_$x${_i}^1$;...; T${_i}^{r\_i'}$ aux$_i\_$x${_i}^{r\_i'}$;                                             // 1 $\leq i\leq$ n
  bool use$\_$once$_i$ = true ;                                           // 1 $\leq i\leq$ n
 
  function set$\_$ok$_i$(T${_i}^1$ z${_i}^1$,...,T${_i}^{r\_i}$ z${_i}^{r\_i}$,T${_i}^{r\_i+1}$ z${_i}^{r\_i+1}$,...,T${_i}^{r\_i+r\_i'}$ z${_i}^{r\_i+r\_i'}$){		// 1 $\leq i\leq$ n
      if (sender == A$_i$ $\&\&$ use$\_$once$_i$){
             use$\_$once$_i$ = false ;
             x${_i}^1$ = z${_i}^1$ ; ... ; x${_i}^{r\_i}$ = z${_i}^{r\_i}$ ; 
             aux$_i\_$x$_1$ = z${_i}^{r\_i + 1}$;...; aux$_i\_$x$_{r_i'}$ = z${_i}^{r\_i + r\_i'}$;
             if (counter == n){
                    if ($\bigwedge_{1 \leq \texttt{i} \leq n}$(aux$_i\_$x${_i}^1$==x${_i}^1$ $\&\&$...$\&\&$ aux$_i\_$x${_i}^{r\_i'}$==x${_i}^{r\_i'}$)) 
                               state = State.Q ;
                    else throw error;
             } else counter = counter + 1 ;
      } else throw error; 
}
\end{lstlisting}}

Each function \texttt{set\_ok$_i$} can be called only once by the party {\tt A$_i$}, 
with two lists of parameters: the first $r_i$ values are used to set the contract's 
fields {\tt x${_i}^1$}, $\cdots$, {\tt x${_i}^{r\_i}$}. 
The last $r_i'$ values are recorded into the auxiliary fields 
{\tt aux$_i\_$x${_i}^{r\_i + 1}$}, $\cdots$, {\tt aux$_i\_$x${_i}^{r\_i +r\_i'}$}, to express that 
{\tt A$_i$} agrees with anyone setting the contract's fields 
{\tt x${_i}^{r\_i + 1}$}, $\cdots$, {\tt x${_i}^{r\_i +r\_i'}$}.
When all the parties have done the agreement, \emph{i.e.}~\texttt{counter} 
is equal to \texttt{n}, a check on the consistency of {\tt aux$_i\_$x${_i}^j$} is performed
and, in case it succeeds, the contract becomes active moving to the state {\tt Q}.
The snippet also shows that contract's states can be easily mapped to enumerations, 
as usual in the {\Solidity} State Machine pattern.

There is a discrepancy between the above code and the semantics of the agreement 
in Table~\ref{tab:sem1}. While the agreement has a transactional nature (it may occur as 
a whole or not), the above {\Solidity} protocol takes time, \emph{i.e.}~it is performed in
several blocks and, in any block, a failure may occur. In this case, there is a
backtrack to the \emph{initial state of the block} and not to the initial state of
the protocol, as it happens in {\smallang}. This means that the error management should
take care of removing partial values stored in the fields of the contract. Nevertheless,
another source of discrepancy seems more awkward: the gas consumption. In fact, the 
successful termination of the agreement as well as the failed one have a cost in 
Ethereum, while it is not the case. To bridge this gap, we should design agreements with 
fees payed by parties that are used for the consensus. We have not yet studied these
details, which are postponed to future work.



\paragraph{Assets.}
Assets are linear resources that cannot be duplicated or leaked:  
when a resource value is assigned, the location previously holding the value is emptied. 
Modern programming languages, $e.g.,$ {\sf Rust} and {\sf Move}, have already recognized the relevance of having 
linear resources as first-class entities, 
because they can significantly simplify programming and the effort required for verification.
{\smallang} features a simple abstraction to manage assets, which is used to 
represent both currency and indivisible tokens. To implement these assets 
on top of a blockchain, we can
resort to the popular token standards on Ethereum (ERC-20 for virtual currency 
and ERC-721 for non fungible tokens\cite{ERCTokens}). Alternatively, we can rely on the 
{\sf Move} language, whose designers have featured programmable linear resources 
by constraining them to adhere to ad-hoc rules specified by its declaring 
module~\cite{Resources20,Move}. Using a pseudo-code inspired to {\sf Move}, we might define (divisible) assets \texttt{h} and \texttt{h$'$} as  resources of type 
\texttt{H}, so that the operations $\texttt{E} \lolli \texttt{h},\texttt{h}'$ 
and $\texttt{E} \lolli \texttt{h},\texttt{A}$  may be encoded by 
\texttt{h.move(E,h$'$)} and \texttt{h.withdraw(E,A)}, 
according to the definitions below:

{\footnotesize
\begin{verbatim}
 resource H { 
     T amount ;
		
     function move(T x, H h) {
          (x <= amount){ h = h + x ; amount = amount - x ; }
     }		
     function withdraw(T x, address A) {
          (x <= amount){ A.send(x) ; amount = amount - x ; }
     }
     constructor(T x){ amount = x ; }
 }
\end{verbatim}
}


\paragraph{Events.}
Events correspond to scheduling a computation for future execution. 
While this is a common feature of concurrent programming
(many mainstream languages provide primitives for \emph{futures} and \emph{callbacks}), 
it is more difficult to implement events in the context of blockchain.
There are two reasons:
\begin{enumerate} 
\item
in blockchains, the time flows according to the block insertion. Therefore, if the event 
should be scheduled in one minute and the next block is inserted in ten minutes, there is
a delay that must be pondered;
\item
blockchains do not admit the record of statements that 
have to be performed afterwards in a future transaction. 
\end{enumerate}
As regards the second issue, the standard technique adopted in Ethereum to 
circumvent this 
limitation
is based on the {\Solidity}'s \emph{event} construct and off-chain \emph{oracle} services.
To explain, the {\smallang} event 
\texttt{E$\, \event \,$ @Q \{ S \}$\, \tostate \,$ @Q$'$} emitted by a function {\tt foo} 
can be mapped to the
{\Solidity} code:

{\footnotesize
\begin{lstlisting}[mathescape,basicstyle=\ttfamily]
    event R(uint time, address lc) ;
    function call_back_R() external {
        if (state == State.Q) { S ; state = State.Q'; }    
    }
    function foo(T$_1$ u$_1$,..., T$_n$ u$_n$){
        ...
        emit R(E,address(this)) ;
    }
\end{lstlisting}}
\noindent
The {\Solidity} function \texttt{foo} emits an event named \texttt{R} carrying 
the time \texttt{E}
and the address of the issuing contract. 
Moreover, an external \emph{DApp service} -- an oracle -- scans the blockchain 
looking for \texttt{R} events and calls the \texttt{call\_back\_R} function of 
the contract at the appropriate time \texttt{E}.
Other more complex and safer protocols can be used. However a code external
to the Ethereum blockchain is always necessary in order to trigger the scheduled 
event handler. 
A more satisfactory implementation of the {\smallang} events might 
adopt ideas taken from the implementation of timeouts in the {\sf Marlowe} on top
of the Cardano blockchain,
as discussed in Section~\ref{sec:relatedworks}.   

To conclude the section, notice that, in this paper, we are overlooking issues regarding errors and backtracks.


\section{Expressivity of {\smallang}}
\label{sec:examples}

{\smallang} has been devised 
for writing legal contracts in a formal and intelligible (to lawyers) way. In this
section we analyze the expressivity of {\smallang} by 
writing the contracts for a set of archetypal acts, ranging from renting to (digital) 
licenses and to bets. 
We conclude this analysis with a table that summarizes the legal elements of the archetypal 
acts and the programming abstractions that we have used to express them in {\smallang}.
%

\subsection{The free rent contract}

 {\footnotesize
 \begin{figure}[t]
\begin{lstlisting}[numbers=left,numberstyle=\tiny,mathescape,basicstyle=\ttfamily,caption={The rent for free contract},captionpos=b,label=freerent] 
stipula Free_Rent {  

  assets token 
  fields numBox, t_start, t_limit

  agreement (Lender, Borrower)(t_start, t_limit) {
       Lender , Borrower : t_start, t_limit      
  } $\tostate$ @Inactive

  @Inactive Lender :  boxProposal (n)[t] {
       t $\lolli$ token 
       n $\send$ numBox 
       now + t_start $\event$ @Proposal { 
             token $\lolli$ Lender } $\tostate$ @End
  } $\tostate$ @Proposal

  @Proposal Borrower :  boxUse {
       (uses(token), numBox) $\send$ Borrower 
       now + t_limit  $\event$  @Using { 
             "Time_Limit_Reached" $\send$ Borrower 
             token $\lolli$  Lender 
       } $\tostate$ @End
  } $\tostate$ @Using
   
  @Using Borrower : returnBox {
       token $\lolli$ Lender 
  } $\tostate$ @End
} 
\end{lstlisting}
\end{figure}
}

The free rent is the simplest kind of legal contract. It involves two parties, 
the lender and the borrower, which initially agree about what good is rented, 
what use should be made of it, the time limit (or in which case it must be returned), 
the estimated of value and any defects in the good. 
Upon agreement, the delivery of the good triggers the legal bond, that is the 
borrower has the permission to use the good and the lender has the prohibition 
of preventing him from doing so. Note that there is no transfer of ownership, 
but only the right to use the good. 
The contract terminates either when the borrower returns the good, or when the 
time limit is reached. Litigations could arise when the borrower violates the 
obligations of diligent storage and care, the obligations of using the good only as intended, and not granting the use to a third party without the lender's consent. In these cases the lender may demand the immediate return of the object, in addition to compensation for the damage. On the other hand, the borrower is entitled to compensation if the good has defects that were known to the lender but that he did not initially disclose.


\medskip\noindent
The free rent contract puts forward the following points:
\begin{itemize}
\item When a legal contract refers to a \emph{physical} good, the smart contract needs a digital handle (an avatar) for that good. Many technological solutions, such as smart locks of IoT devices, are actually available. In {\smallang} we abstract from the specific nature of such a digital handle, and we simply represent it as an \texttt{asset}, which intuitively corresponds
to a non fungible token associated to the physical good.
\item The rent legal contract grants just the \emph{usage} of a good without the transfer of ownership. Accordingly, while the communication of the token provides full control of the associated physical good, we assume an operation \texttt{uses(token)} (resp. \texttt{use\_once(token)} or \texttt{uses(token,A)}) that generates a usage-code providing access to the object associated to the token (resp. a usage-code only valid (once) for the party \texttt{A}).
\item 
In a legal rent contract it is important to acknowledge the delivery of the good, since this is the action that triggers the legal bonds. We rely on assets and their \emph{semantics} 
to implement this feature.
\end{itemize}

The {\smallang} code for the free rent of a locker 
is written in Listing~\ref{freerent}.
The two parties agrees on the time limit 
for the locker usage (\texttt{time\_limit}) and the time limit to start the 
usage (\texttt{time\_start}). 
Contract's states allow one sequence of actions: first {\tt Lender} 
sends the number {\tt n} of the locker and the token {\tt t} associated to {\tt n} by calling \texttt{box\_proposal} (line 10). 
This action moves the contract to the temporary state \texttt{Proposal} and 
schedule the event in line 13. This event is essential to prevent the 
unique token associated to the locker to be indefinitely locked-in in the 
smart contract when the borrower never calls the \texttt{boxUse} function 
to finalize the delivery of the good.
If {\tt Borrower} calls the function \texttt{boxUse} 
(line 16) within the timeout \texttt{time\_start}, then the number of the rented locker 
and  the access code are returned. 
At the same time, a second timeout is installed to check the time limit for the 
locker usage, and the final state change to \texttt{Using} (line 21) vanishes
the timeout installed in line 13. 
This second event is needed to prevent a
never ending use of the locker.
If \texttt{time\_limit} is reached and the contract's 
state is still \texttt{Using}, then (lines 18-19) a message is sent to {\tt Borrower}
 and the token is sent back to {\tt Lender}, which becomes again in full control of
  the locker and can thus invalidate the access code held by the borrower. 
Otherwise, the rent contract can terminate
because the borrower explicitly returns the good before the time limit. 
This is represented by a call of the function \texttt{returnBox} (line 23).

\noindent
The smart legal contract of Listing~\ref{freerent} does not consider the compensations 
that would be needed to deal with the disputes due to breaches of the contract, such 
as the borrower's diligent care during the locker's usage. 
These violations require off-chain monitoring and a dispute resolution mechanism. 
The next example illustrates how this off-chain monitoring and enforcement can be 
combined with the on-chain code.

\subsection{The Digital Licensee contract}

{\footnotesize
 \begin{lstfloat}
\begin{lstlisting}[numbers=left,numberstyle=\tiny,mathescape,basicstyle=\ttfamily,caption={The contract for a digital licence},captionpos=b,label=DigitalLicence] 
stipula Licence {
  assets token, balance  
  fields cost, t_start, t_limit 
 
  agreement (Licensor,Licensee,Authority)(cost, t_start, t_limit) {  
    Licensor, Licensee : cost, t_start, t_limit 
   } $\tostate$ @Inactive

  @Inactive Licensor : offerLicence [t] { 
    t $\lolli$ token 
    now + t_start $\event$ @Proposal { 
               token $\lolli$ Licensor } $\tostate$ @End
  } $\tostate$ @Proposal

  @Proposal Licensee : activateLicence [b]
    (b == cost){
       b $\lolli$ balance
       balance*0,1 $\lolli$  balance, Authority 
       uses(token,Licensee) $\send$ Licensee 
       now + t_limit $\event$  @Trial {
                      balance $\lolli$ Licensee 
                      token $\lolli$  Licensor 
                    } $\tostate$ @End
  } $\tostate$ @Trial

  @Trial Licensee : buy {
    balance $\lolli$ Licensor 
    token $\lolli$ Licensee 
  } $\tostate$ @End

  @Trial Authority : compensateLicensor {
    balance $\lolli$  Licensor 
    token $\lolli$  Licensor 
  } $\tostate$@End

  @Trial Authority : compensateLicensee {
    balance $\lolli$ Licensee 
    token $\lolli$ Licensor; 
  } $\tostate$ @End
}
\end{lstlisting}
\end{lstfloat}
}

Let us consider a contract corresponding to a licence to access a digital service, like a software or an ebook: 
the digital service can be freely accessed for a while, and can be permanently bought with an explicit communication within the evaluation period (for a similar example, see \cite{Sartor2018}). The licensing contractual clauses  can be described as follows:
\begin{description}
\item[Article 1.]
{\tt Licensor} grants {\tt Licensee} for a licence to evaluate the {\tt Product}
 and fixes (\emph{i}) the
\emph{evaluation period} and (\emph{ii}) the \emph{cost} of the {\tt Product} if 
{\tt Licensee} will 
bought it.
\item[Article 2.] 
{\tt Licensee} will pay the {\tt Product} in advance; he will be reimbursed if the 
{\tt Product} 
will not be bought with an explicit communication within the evaluation period.
 The refund will be the 90\% of the cost
because the 10\% is payed to the {\tt Authority} (see Article 3).
\item[Article 3.] 
{\tt Licensee} must not publish the results of the evaluation during the evaluation period
and {\tt Licensor} must reply within 10 hours to the queries of {\tt Licensee} related to the 
{\tt Product}; this is supervised by {\tt Authority} that may interrupt the licence and reimburse
either {\tt Licensor} or {\tt Licensee} 
according to whom breaches this agreement.
\item[Article 4.]
This license will terminate automatically at the end of the evaluation period, if the licensee does not buy the product.
\end{description}

Compared to the previous example, this contract involves payment and refund: 
an amount of currency is escrowed, and two parts of it will be sent to different 
parties, the {\tt Authority} and either the {\tt Licensor} or 
the {\tt Licensee}. {\smallang} provides the general \texttt{asset} abstraction, 
together with a general operation to move just a (positive) subset of the asset
 to a different owner. 
This is exactly what is needed to deal with currency, therefore the {\smallang} licence contract holds two different assets: an indivisible \texttt{token} providing an handle to the digital service, and a \texttt{balance} that is a divisible asset corresponding to the amount of currency kept in custody inside the smart contract.

A further important feature of the contract is Article 3 that defines specific constraints 
about the \emph{off-chain} behaviour of {\tt Licensor} and {\tt Licensee}. 
This exemplifies the very general situations where contract's violations cannot 
be fully monitored by the on-chain software, such as the publication of a post in 
a social network, or the leakage of a secret password, or the violation of the 
obligation of diligent storage and care. In all these cases, it is required a 
trusted third party, say an {\tt Authority}, to supervise the disputes occurring 
from the off-chain monitoring and to provide a trusted dispute resolution mechanism.
The code in Listing~\ref{DigitalLicence} illustrates the encoding
of the off-chain monitoring and enforcement mechanism with the on-chain smart 
contract code in {\smallang}.



The \texttt{agreement} of Listing~\ref{DigitalLicence} involves three parties: 
\texttt{Licensor}, which fixes the parameters of the contract, according to Article 1., 
\texttt{Licensee}, which explicitly agrees, and \texttt{Authority}, which does not 
need to agree upon the contracts' parameters (\emph{i.e.}~the emptyset agreement noted 
$\zero$), but it is important that it is involved 
in the agreement synchronization, because it plays the role of the trusted third party 
that is entitled to call the functions \texttt{compen}-\texttt{sateLicensor} and 
\texttt{compensateLicensee}.
%

In \texttt{activateLicence}, the caller, \emph{i.e.}~the 
{\tt Licensee}, is required to send an amount of assets equal to the fixed cost of 
the license. 
Notice the difference between line 18 and line 19: the first one is the move 
of a fraction of asset towards the authority, while the second is the 
simple communication to {\tt Licensee} of a personal usage code associated to the token. 
Once entered in the \texttt{Trial} state, the contract can terminate in three ways: 
(\emph{i}) the licensee expresses its willingness to buy the licence by calling 
the function \texttt{call} which grants him the full token, or (\emph{ii})
 the time limit for the free evaluation period is reached, thus the scheduled event 
refunds the licensee and gives the token back to the licensor, or 
(\emph{iii}) during the evaluation period a violation to Article 3 is identified and the authority pre-empts the license by calling either the function \texttt{compensateLicensor} or \texttt{compensateLicensee}. Observe that it is important that the code guarantees that, in all the possible cases, the assets, both the token and the balance, are not indefinitely locked-in the contract. 

%
%
%
%
%
%
%
%
%

\subsection{A bet contract}

The bet contract is a simple example of a legal contract that contains an element 
of randomness (\emph{alea}), \emph{i.e.}~where the existence of the performances 
or their extent depends on an event which is entirely independent of the will of 
the parties.
The main element of the contract is a future, aleatory event, such as the 
winner of a football match, the delay of a flight, the future value of a company's stock.

A digital encoding of a bet contract requires that the parties explicitly 
agree on the source of data that will determine the final value of the 
aleatory event (the \emph{Data Provider}), that is a specific online service, 
an accredited institution, or any trusted third party. 
It is also important that the digital contract provides precise time limits 
for accepting payments and for providing the actual value of the aleatory event. 
Indeed there can be a number of issues: the aleatory event does not happen, 
\emph{e.g.}~the football match is cancelled, or the data provider fails to provide 
the required value, \emph{e.g.}~the online service is down.

{\footnotesize
\begin{lstlisting}[numbers=left,,numberstyle=\tiny,mathescape,basicstyle=\ttfamily,,caption={The contract for a bet},captionpos=b,label=BetContract] 
stipula Alea {
  assets bet1, bet2
  fields alea_fact, val1, val2, data_source, 
         fee, amount, t_before, t_after

  agreement(Better1,Better2,DataProvider)
            (fee, data_source, t_before, t_after, alea_fact, amount){
 
    DataProvider, Better1, Better2 : fee, data_source, t_after, alea_fact
    Better1, Better2 : amount, t_before                  
   } $\tostate$ @Init
    
  @Init Better1 : place_bet(x)[y]
    (y == amount){ 
        y $\lolli$ bet1
        x $\send$ val1
        t_before $\event$ @First { bet1 $\lolli$ Better1 } $\tostate$ @Fail
    } $\tostate$ @First

  @First Better2: place_bet(x)[y]
    (y == amount){ 
        y $\lolli$ bet2
        x $\send$ val2  
        t_after $\event$ @Run {
            bet1 $\lolli$ Better1
            bet2 $\lolli$ Better2 } $\tostate$ @Fail
  } $\tostate$ @Run
    
  @Run DataProvider : data(x,z)[]
    (x==alea_fact){ 
        (z==val1){              // The winner is Better1             
            fee $\lolli$ bet2,DataProvider
            bet2 $\lolli$ Better1
            bet1 $\lolli$ Better1
        }     
        (z==val2){              // The winner is Better2             
            fee $\lolli$ bet1,DataProvider
            bet1 $\lolli$ Better2
            bet2 $\lolli$ Better2
        }
        (z != val1 $\&\&$ z != val2){          //No winner
            bet1 $\lolli$ DataProvider
            bet2 $\lolli$ DataProvider
        } 
   } $\tostate$ @End   
} 
\end{lstlisting}
}

The {\smallang} code in Listing~\ref{BetContract} corresponds to the case where \texttt{Better1}, respectively
\texttt{Better2}, places \texttt{val1}, respectively \texttt{val2}, a bet corresponding 
to the agreed \texttt{amount} of currency, stored in the contract's assets \texttt{bet1} 
and \texttt{bet2} respectively\footnote{For simplicity, this code requires {\tt Better1} 
to place its bet before {\tt Better2}, however it is easy to add similar function to let
the two bets be placed in any order.}. Observe that both bets must be placed within 
an (agreed) time limit \texttt{t\_before} (line 17), to ensure that the legal bond 
is established before the occurrence of the aleatory event. 
The second timeout, scheduled in line 24, is used to ensure the contract termination 
even if the \texttt{DataProvider} fails to provide the expected data, through the 
call of the function \texttt{data}. 

Compared to the Digital Licence in Listing~\ref{DigitalLicence}, the role of the 
\texttt{DataProvider} here is less pivotal than that of the \texttt{Authority}. 
While it is expected that {\tt Authority} will play its part, {\tt DataProvider} 
is much less than 
a peer of the contract. It is sufficient that 
it is an independent party that is entitled to call the contract's function 
to supply the expected external data. The crucial point of trust here is the 
\texttt{data\_source}, not the \texttt{DataProvider}. In other terms, since the 
parties involved in the agreement need not to trust each other, it might happen 
that {\tt DataProvider} supplies an incorrect value through the function 
\texttt{data}. In this case, the betters can appeal against the data provider since 
they agreed upon the data emitted by the \texttt{data\_source}. 
As usual, any dispute that might render the contract voidable or invalid, 
\emph{e.g.}~one better knew the result of the match in advance, can be handled by adding 
to the code of an authority party, according to the pattern illustrated in the Digital 
Licence example.

\subsection{Specification patterns in {\smallang}}

The clauses of the foregoing legal acts have been specified in {\smallang} by means of 
patterns that, in our view, are common from the juridical point of view. This is 
summarized in Table~\ref{tab.sumup}. 

{\small
\begin{table*}[hbt]
\begin{tabular}{l|l|l}
{\bf Example} & {\bf Legal clauses} & {\bf Smart Encoding in {\smallang}} \\[2mm]
\hline
Free rent & Permission and Prohibition & \begin{minipage}{7cm}
                                        States of the contract
                                       to allow or prevent the call of a function
                                       \end{minipage} 
                                        \\[4mm]
          & Obligation to return the good   & \begin{minipage}{7cm}
                                       Event: timeout that triggers a repercussion
                                      \end{minipage}\\[4mm]
          & \begin{minipage}{5cm}
             {Access to a physical good without transfer of ownership}						      
             \end{minipage}      & 
                       			\begin{minipage}{8cm}
                                (non fungible) Token: transfer only a usage code associated to the Token,   
                                i.e. operation {\small\texttt{uses(token,L)$\send$ L}}
                                \end{minipage}
               \\[4mm]
          & \begin{minipage}{5cm}
             Prohibition of preventing the usage 
             \end{minipage}   & 
                              Token held in custody in the smart contract 
                  \\[2mm]
\hline
Digital License & \begin{minipage}{5cm}
                 {Permission, Prohibition, and Obligation}						      
             \end{minipage} & States and Events
             \\[2mm]
                 & Currency and escrow    &  (divisible) Asset 
             \\[2mm]
                 & Usage and purchase & \begin{minipage}{7cm} 
                                 Token: either transfer only an associated, and personal,
                                 usage code, or transfer the token, i.e. {\small\texttt{uses(token,L)$\send$ L}} and {\small\texttt{token$\lolli$ L}}
                                 \end{minipage}\\[4mm]
                 & \begin{minipage}{5cm}
                   Off-chain constraints and 
                   Authority for dispute resolution
                   \end{minipage} & \emph{Implicitly} trusted party included in agreement
                   \\[2mm]
\hline
Bet       & Commitment and Obligation & Event\\[2mm]
          & Aleatory value & Event: timeout to decide the effective value \\[2mm]
          & Authority that decides the value & \emph{Explicitly} trusted party included in agreement
          \\[2mm]
\hline                    
\end{tabular}

\caption{\label{tab.sumup} \mbox{Legal clauses of the archetypal legal acts and their
encodings in {\smallang} }}
\end{table*}
}

\section{Related works}
\label{sec:relatedworks}

A number of projects have put forward legal markup languages, to wrap logic and 
other contextual information around traditional legal prose, and providing 
templates for common contracts. 
OpenLaw~\cite{OpenLaw} also allows to reference Ethereum-based smart contracts 
into legal agreements, and automatically trigger them once the agreement is digitally 
signed by all parties. Signatures by all relevant parties are stored on IPFS 
(the Inter-Planetary File System) and the Ethereum blockchain.
The Accord project~\cite{Accord} provides an open, standardized format for smart 
legal contracts, consisting of natural language and computable components. 
These contracts can then be interpreted by machines and they do not necessarily 
operate on  blockchains.
These projects come with sets of templates for standard legal contracts, that can 
be customized by setting template's parameters with appropriate values.
In {\smallang}, rather than software templates, it is possible to define 
 specific programming 
patterns that can be used to encode the building blocks of legal contracts.
(see the Table~\ref{tab.sumup}). 
Lexon~\cite{Lexon} uses context free grammars to define a programming language 
syntax that is at the same time human readable and automatically translated into, 
\emph{e.g.}~{\Solidity}. Even if the high level Lexon code is very close to natural 
language, there is no real control over the code that is actually run: 
the semantics of the high level language is not defined, thus the actual behaviour of 
the contract is that of the automatically generated {\Solidity} code, which might 
be much more subtle than that of the (much simpler and more abstract) Lexon source.
Compared to the {\Solidity} code of the Lexon examples in~\cite{LexonEx}, 
the {\smallang} version of the same contracts is much clearer, thanks to primitives 
like agreement and asset movements. Thus, directly coding in {\smallang} appears to 
be safer than relying on the Lexon-{\Solidity} pair. Nevertheless, it should be not difficult 
to design an automatic translation from Lexon to {\smallang} 
 so to not bind Lexon contracts to any specific implementation.

Our work aims at conducing a foundational study of legal contracts, in order to elicit 
a precisely 
defined set of building blocks that can be used to describe, analyse and execute 
(thus enforce) legal agreements. This is similar to what has been done in~\cite{PeytonJones}, 
which puts forward a set of combinators expressing financial and insurance contracts, 
together with a denotational semantics and algebraic properties that says what 
such contracts are worth. These ideas have been implemented 
by the Marlowe and Findel languages~\cite{Marlowe,Findel}, which are (small) domain specific language 
featuring constructs like participants, tokens, currency and timeouts to wait 
until a certain condition becomes true (similarly to {\smallang}). 

In Marlowe the control logic is fully embedded in the contract, that can always progress (and close) even without the participation of a non collaborative party. In particular, to ensure progress, urgency and default refund mechanism, Marlowe always impose timeouts, even for money commitments and money retrieval authorizations. However, this \emph{pull} mechanism makes the interaction logic indirect and complex, whereas in {\smallang} it is always clear who is the subject of each action and where the assets flow. Setting the correct timeouts in Marlowe may also become difficult, while in {\smallang} timeouts are optional (using events) and default mechanisms can be programmed as escape clauses that can be triggered by the Authority calling corresponding functions. Otherwise, in {\smallang}, the finite lifetime of a contract may be easily programmed  by arranging its value during the agreement phase, and issuing a corresponding event in the initial state. 
%

%
Marlowe semantics is written in Haskell and is executed on the Cardano blockchain, 
where the lifetime of a contract and timeouts are computed in terms of slot numbers,
%
%
that are measured by (Cardano's) block number. 
Then the language interpreter takes as additional input the current slot interval, 
and the contract will continue as the timeout-continuation as soon as any valid transaction is issued after the timeout's slot number is reached. 
Thus, the execution of a contract will involve multiple blocks, with multiple steps in each block. Moreover, the time limit $T$ used as timeouts must be intended as $[T-\Delta,T+\Delta]$, where $\Delta$ is a parameter of the implementation.
Similar ideas can be exploited to implement {\smallang}'s events on top of blockchains.

Overall, we remark that legal contracts are more general and more expressive than financial contracts. Accordingly, DSLs like Marlowe and Findel are built around a fixed set of contract's \emph{combinators}, that can be combined according to suitable (algebraic) laws. Then they can be implemented using an interpreter, that is a single program that handles any financial contract by evaluating its (most external) combinator. The case of {\smallang} is more complex: agreement, assets, events, named states and named functions are programming primitives rather than combinators. Therefore each {\smallang} contract must be implemented, actually compiled, into a suitable running software, e.g. a Java application or a Ethereum smart contract, and the parties must collaborate by invoking the contract's functions to make the contract progress.

Finally,
the class-based programming style of {\smallang} is similar to that 
of {\Solidity}, but there are many differences between the two. Actually,
{\smallang} is much similar to Obsidian~\cite{ObsidianPaper}, which is 
based on state-oriented programming and explicit management of linear assets, 
whose usability has been experimentally assessed~\cite{ObsidianUsage}.
Obsidian has a type system that ensures the correct manipulations of
objects
according to their current states and that linearly typed assets are not  accidentally lost. 
On the contrary, {\smallang} is untyped:
the introduction of a type discipline is orthogonal and is postponed to a 
later stage where we plan to investigate static analysis techniques specifically 
suitable to the legal setting. 
As a cons, Obsidian has no agreement nor event primitives, therefore the consensus about the contract's terms and the enforcing of legal obligations must be implemented in a much more indirect way.


\section{Conclusions}
\label{sec:conclusions}

This paper presents a domain-specific language for defining 
legal contracts that can be automatized on a blockchain system. {\smallang}
features a distilled number of operations that enable the formalisation of the main
elements of juridical acts, such as  permissions, prohibitions, and obligations. 
{\smallang} is formal and we are proud of its semantics -- the legal bisimulation --
that allows one to equate contracts that differ for clauses (events) that can never
be triggered or for the order of non-interfering communications. Furthermore,
the {\smallang} semantics has also been used for sketching an implementation 
on top of the Ethereum blockchain.

We believe that a range of legal arrangements
can be adequately translated into {\smallang},
using simple patterns for the key elements of legal contracts (see the archetypal examples in the Appendix). 
Nevertheless we acknowledge that legal contracts cannot be fully replaced by 
{\smallang} contracts, 
since the formers thoroughly use the flexibility and generality of natural languages, 
may appeal to complex and undetermined social-normative concepts 
(such as fairness or good faith), 
may need to be revised as circumstances change, may need intelligent enforcements in case on non-compliance, etc. It is matter of (our) future research to deeply investigate these
interdisciplinary aspects and provide at lest partial solutions.

From the computer science point of view, a number of issues deserve to be investigated
in full detail: the extension of the language with operations for failures, devising
linear type systems that enforce the 
partial correctness of {\smallang} codes, the implementation of the language, the 
definition of a (formal) translation in {\smallang} of a high level language, such as Lexon or part of it, in order to relieve lawyers from understanding computer science jargons. 

Overall, we are optimistic that future research on {\smallang} can  satisfactorily
address the above issues because its model is simple and rigorous, which are, in our opinion, 
fundamental criteria for reasoning about legal contracts and for 
understanding their basic principles. In our mind, {\smallang} is the backbone of a framework
where addressing and studying other, more complex features that are drawn from juridical acts.
  

%

\section*{Acknowledgements}
Giovanni Sartor has  been supported by the H2020 European Research Council (ERC) Project ``CompuLaw'' (G.A. 833647). 
Cosimo Laneve has been partly supported by the H2020-MSCA-RISE project ID 778233 ``Behavioural Application Program Interfaces (BEHAPI)''.
Silvia Crafa dedicates this work to Universit\`a di Padova and its 800th academic year.
We are grateful to the law student Alessandro Parenti for his help in constructing the legal examples in Section~\ref{sec:examples}.

\bibliographystyle{plain}
\bibliography{bibliography}

\end{document}
